\theoremstyle{theorem} \newtheorem{thm}{Theorem}
\theoremstyle{definition} 
\theoremstyle{definition} \newtheorem{defn}{Definition}
\theoremstyle{definition} \newtheorem{lemma}{Lemma}
\theoremstyle{definition}
  \newcommand{\R}{\ensuremath{\mathbb{R}}}
  \newcommand{\E}{\ensuremath{\mathbb{E}}}
  \newcommand{\Uc}{\mathcal{U}}
  \newcommand{\Vc}{\mathcal{V}}
  \newcommand{\Zc}{\mathcal{Z}}
  \newcommand{\Fc}{\mathcal{F}}
  \newcommand{\Bc}{\mathcal{B}}
  \newcommand{\bx}{\bar{x}}
\newcommand{\V}[1]{\ensuremath{\mathbf{#1}}}
\newcommand{\norm}[1]{\left|\left| #1 \right|\right|}
\newcommand{\TODO}[1]{ 
\ifx\NOTES\undefined\else
{\color{red} [!]}\footnote{ {\color{red} TODO: #1}}
\fi
}
\newcommand{\aslim}{\stackrel{a.s.}{\rightarrow}}
\newcommand{\aseq}{\stackrel{a.s.}{=}}
\newcommand{\iid}{\stackrel{\text{iid}}{\sim}}
\newcommand{\indep}{\stackrel{\text{indep.}}{\sim}}
\newcommand{\Pd}[3]{\ifthenelse{\equal{#3}{1}}{\frac{\partial #1}{\partial #2}}{\frac{\partial^{#3} #1}{\partial #2^{#3}}}}
\newcommand{\Vu}{\mathbf{u}} \newcommand{\Vut}{\mathbf{\tilde{u}}}
\newcommand{\Vv}{\mathbf{v}} \newcommand{\Vvt}{\mathbf{\tilde{v}}}
\newcommand{\Vt}[1]{\ensuremath{\mathbf{\tilde{#1}}}}
\newcommand{\DANNY}[1]{ 
\ifx\NOTES\undefined\else
{\color{blue} [!]}\footnote{ {\color{blue} Review: #1}}
\fi
}
\title{Optimal Shrinkage of Singular Values Under Random Data Contamination}
\author{
  Danny Barash \\
  School of Computer Science and Engineering\\
  Hebrew University\\
  Jerusalem, Israel \\
  \texttt{danny.barash@mail.huji.ac.il} \\
  \And
  Matan Gavish \\
  School of Computer Science and Engineering\\
  Hebrew University\\
  Jerusalem, Israel \\
  \texttt{gavish@cs.huji.ac.il} \\
}
\begin{document}

\maketitle \begin{abstract}
  A low rank matrix $X$ has been contaminated by
  uniformly distributed noise, missing values, outliers and corrupt entries.
  Reconstruction of $X$ from the singular values and singular vectors of the
  contaminated matrix $Y$ is a key problem in machine learning, computer vision
  and data science.  In this paper, we show that common contamination models
  (including arbitrary combinations of uniform noise, missing values, outliers
  and corrupt entries) can be described efficiently using a single framework.
  We develop an asymptotically optimal algorithm that estimates $X$ by
  manipulation of the singular values of $Y$, which applies to any of the
  contamination models considered.  Finally, we find an explicit signal-to-noise
cutoff, below which estimation of $X$ from the singular value decomposition of
$Y$ must fail, in a well-defined sense.  
\end{abstract}

\section{Introduction}
Reconstruction of low-rank matrices from noisy and otherwise contaminated data
is a key problem in machine learning, computer vision and data science.
Well-studied problems
such as  dimension reduction \cite{boutsidis2015randomized}, collaborative
filtering \cite{rao2015collaborative,luo2014efficient}, topic models
\cite{das2015gaussian}, video processing \cite{Ji2010}, face recognition
\cite{Yang2017}, predicting preferences \cite{meloun2000critical}, analytical
chemistry \cite{Rennie2005} and background-foreground separation
\cite{Bouwmans2016} all reduce, under popular approaches, to low-rank matrix reconstruction.  A
significant  part of the literature on these problems is based on the
singular value decomposition (SVD) as the underlying algorithmic component, see
e.g. \cite{Candes2010,Hastie1999,lin2013}.

Understanding and improving the behavior of
SVD in the presence of random data contamination therefore arises as 
a crucially important problem in machine learning. 
While this is certainly a classical problem 
\cite{huber2011robust,gnanadesikan1972robust,fischler1981random},
it remains of significant
interest, owing in part to the emergence of low-rank matrix models for 
matrix completion and collaborative filtering
\cite{Wright2009,Candes2011}.

Let $X$ be an $m$-by-$n$ unknown low-rank matrix of interest ($m\leq n$), and assume that we
only observe the data matrix $Y$, which is a contaminated or noisy version 
 of $X$.
 Let 
\begin{align} \label{svd:eq}
  Y=\sum_{i=1}^m y_i \V{u}_i \V{v}_i'
\end{align}
be the SVD of the data matrix $Y$. 
Any algorithm based on the SVD  
essentially aims 
to obtain an estimate for the target matrix $X$ from \eqref{svd:eq}.
Most practitioners simply form the Truncated SVD (TSVD) estimate  \cite{golub1965calculating} 
\begin{align} \label{tsvd:eq}
  \hat{X}_r = \sum_{i=1}^r y_i \V{u}_i \V{v}_i'
\end{align}
where $r$ is an estimate of $rank(X)$, whose choice in practice tends to be ad hoc \cite{Gavish2014}.

Recently, \cite{Shabalin2013,candes2013unbiased,gavish2017optimal} have shown that under white
additive noise,
it is useful to apply a carefully
designed {\em shrinkage} function $\eta:\R\to\R$ to the data singular values,
and proposed estimators of the form
\begin{align} \label{shrinkage:eq}
  \hat{X}_\eta =  \sum_{i=1}^n \eta(y_i) \V{u}_i\V{v}_i'\,.
\end{align}
Such estimators are extremely simple to use, as they involve only simple
manipulation of the data singular values. Interestingly, in the additive white noise
case, it was shown that a unique optimal shrinkage function $\eta(y)$ exists, 
which asymptotically delivers the same performance 
as the best possible rotation-invariant estimator based on the data $Y$ 
\cite{gavish2017optimal}. 
Singular value shrinkage thus emerged as a simple yet
highly effective method for improving the SVD in the presence of 
white additive noise, with the unique optimal shrinker as a natural choice for
the shrinkage function. A typical form of optimal singular value shrinker is
shown in Figure \ref{fig:shrinker} below, left panel.

Shrinkage of singular values, an idea that can be traced back to
Stein's groundbreaking work on covariance estimation from the 1970's \cite{Stein1986}, 
is a natural generalization of the classical TSVD. Indeed, $\hat{X}_r$ 
is equivalent to shrinkage with the {\em hard thresholding} shrinker
$\eta(y)=\mathbf{1}_{y\geq \lambda}$, as \eqref{tsvd:eq} is
equivalent to  
\begin{align} \label{hard:eq}
  \hat{X}_\lambda = \sum_{i=1}^n \mathbf{1}_{y_i\geq \lambda}
  \V{u}_i \V{v}_i'\,
\end{align}
with a specific choice of the so-called {\em hard threshold} $\lambda$.
While the choice of the rank $r$ for truncation point 
TSVD is often ad hoc and based on
gut feeling methods such as the Scree Plot method \cite{Cattell1966}, its
equivalent formulation, namely hard thresholding of singular values, allows formal and systematic
analysis.
In fact, restricting attention to hard thresholds alone  \cite{Gavish2014} 
has shown that under white additive noise there exists a unique asymptotically
optimal choice of hard threshold for singular values. The optimal hard threshold
is a systematic, rational choice for the number of singular values that should
be included in a truncated SVD of noisy data. 
\cite{Nadakuditi2014}  has proposed an algorithm that finds $\eta^*$ in 
presence of additive noise and missing values, but has not derived an 
explicit shrinker.

\subsection{Overview of main results}
In this paper, we extend this analysis to common data contaminations 
that go well beyond additive white noise,
including an arbitrary combination of additive
noise, multiplicative noise, missing-at-random entries, uniformly distributed
outliers and uniformly distributed corrupt entries. 

The primary contribution of this paper is formal proof that there exists a
unique asymptotically
optimal shrinker for singular values under 
uniformly random data contaminations,
as well a unique asymptotically optimal hard threshold. Our results are based on
a novel, asymptotically precise description of the effect of these data
contaminations on the singular values and the singular vectors of the data
matrix, extending the technical contribution of 
\cite{Shabalin2013,Nadakuditi2014,gavish2017optimal} to the  setting of 
general uniform data contamination.

\paragraph{General contamination model.}
We introduce the model 
\begin{eqnarray} \label{model:eq}
Y = A\odot X+ B
\end{eqnarray}
where $X$ is the target matrix to be recovered, and $A,B$ are random
matrices with i.i.d entries. 
Here, $(A\odot B)_{i,j} = A_{i,j}B_{i,j}$ is the Hadamard (entrywise) product
of $A$ and $B$.

Assume that $A_{i,j}\iid (\mu_A,\sigma_A^2)$, meaning that the entries of $A$ are
i.i.d drawn from a distribution with mean $\mu_A$ and variance $\sigma^2_A$, and
that $B_{i,j}\iid(0,\sigma_B^2)$.
In Section \ref{sec:modes} we show that for various choices of
the matrix $A$ and $B$, 
this model represents a broad range of uniformly distributed
random contaminations,
including an arbitrary combination of additive
noise, multiplicative noise, missing-at-random entries, uniformly distributed
outliers and uniformly distributed corrupt entries.
As a simple example, if $B\equiv 0$ and $P(A_{i,j}=1) = \kappa$, then
the $Y$ simply has missing-at-random entries.

To quantify what makes a ``good'' singular value shrinker 
$\eta$ for use in
\eqref{shrinkage:eq}, we use the standard  Mean Square Error (MSE) metric and 
\[ 
  L(\eta|X) = \norm{\hat{X}_\eta(Y)-X}_F^2\,.  
\] 
Using the methods of \cite{gavish2017optimal},
our results can easily be extended to
other error metrics, such as the nuclear norm or operator norm losses.  
Roughly speaking, an optimal shrinker $\eta^*$ has the property that,
asymptotically as the matrix size grows,
\[ \label{optimal_shrinker:def}
  L(\eta^*|X)  \leq L(\eta|X)
\]
for any other shrinker $\eta$ and any low-rank target matrix $X$.

The design of optimal shrinkers requires a subtle understanding of the random fluctuations of
the data singular values $y_1,\ldots,y_n$, which are caused by the random
contamination. Such results in random matrix theory are generally hard to prove,
 as there are nontrivial correlations between $y_i$
and $y_j$, $i\neq j$.  Fortunately, in most applications it is very reasonable to
assume that the target matrix $X$ is low rank.  This allows us to overcome this
difficulty by following \cite{Nadakuditi2014,Shabalin2013,Gavish2014} and 
considering an
asymptotic model for low-rank $X$, inspired by Johnstone's Spiked Covariance
Model \cite{Johnstone2001}, in which the correlation between $y_i$ and $y_j$,
 for $i\neq j$ vanish asymptotically.

 We state our main results informally at first. 
 The first main result of this paper
is the existence of a unique asymptotically optimal hard threshold $\lambda^*$
in \eqref{hard:eq}. 

Importantly, as $\E(Y)=\mu_A X$, to apply hard thresholding to $Y=A\odot X+B$
we must from now on define 
{\small
\[\hat{X}_\lambda = \frac{1}{\mu_A} \sum_{i=1}^n \mathbf{1}_{y_i>\lambda}
\V{u}_i\V{v}_i' \,.\] \label{eq:svht_mu}
}

\begin{thm} \label{thm:informal_svht} 
  (Informal.) 
  Let $X$ be an $m$-by-$n$ low-rank matrix and assume that we observe the
  contaminated data matrix $Y$ given by the general contamination model
  \eqref{model:eq}.
  Then there exists a unique optimal (def. \ref{def:optimal_shrinker}) hard threshold $\lambda^*$ for the singular values of $Y$,
  given by     
  {\small
  \begin{eqnarray*}
      \lambda^* = \sigma_B\sqrt{ \left(c+\frac{1}{c}\right )\left(c+\frac{\beta }{c}\right)}    
    \end{eqnarray*}
  }
      where  $\beta=m/n$ and {\small $c=\sqrt{
      1+\beta+\sqrt{1+14\beta+\beta^2}}/\sqrt{2}$. }  
\end{thm}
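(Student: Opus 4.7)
The plan is to reduce the contamination model $Y = A\odot X + B$ to the additive-noise spiked model analyzed in \cite{Gavish2014} and apply the existing characterization of the optimal hard threshold. Decompose $A = \mu_A \mathbf{1}\mathbf{1}' + \tilde A$, where $\tilde A$ has i.i.d.\ mean-zero entries with variance $\sigma_A^2$. Since $(\mathbf{1}\mathbf{1}')\odot X = X$, the model rewrites as $Y = \mu_A X + N$ with $N := \tilde A \odot X + B$: a ``low-rank signal plus noise'' decomposition up to the scalar $\mu_A$. The only non-standard feature is that $N$ is not i.i.d.\ -- it has an inhomogeneous variance profile $\mathrm{Var}(N_{ij}) = \sigma_A^2 X_{ij}^2 + \sigma_B^2$ correlated with the signal.

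The main obstacle is to show that $N$ is asymptotically indistinguishable, spectrally, from an i.i.d.\ matrix with entry variance $\sigma_B^2$. Under the standard spiked-model scaling, $X$ has fixed rank $r$ and its detectable singular values are of order $\sqrt{n}$, so its entries satisfy $|X_{ij}| = O(m^{-1/2})$. The variance profile of $N$ therefore converges uniformly to $\sigma_B^2$ with $O(m^{-1})$ corrections. A Stieltjes-transform resolvent comparison (or a Lindeberg-type exchange) then shows that the empirical spectral distribution of $NN'$ converges to the Marchenko--Pastur law with variance $\sigma_B^2$ and aspect ratio $\beta$, so the bulk edge of $Y$ matches that of $B$ alone. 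For each fixed population singular value of $\mu_A X$ above the BBP detectability threshold, the corresponding empirical singular value $y_i$ and the cosines between empirical and population singular vectors converge to the Benaych-Georges--Nadakuditi limits driven by $\sigma_B$, following the perturbation framework of \cite{Shabalin2013, Nadakuditi2014}; the Hadamard term $\tilde A \odot X$ has zero mean and, by the same low-rank scaling, its contribution to any signal singular vector has vanishing norm.

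Granted this reduction, the theorem follows by applying the optimal hard threshold analysis of \cite{Gavish2014} to the equivalent model $Y/\mu_A = X + N/\mu_A$. The MSE of $\hat X_\lambda = \mu_A^{-1}\sum_{y_i > \lambda} y_i u_i v_i'$ decomposes into a sum of per-spike contributions, each empirical singular value above the bulk edge being either worth retaining or not depending on the sign of an explicit function of its displaced value. The optimal $\lambda^*$ is the unique crossover point that makes this marginal contribution vanish; a routine calculation reproduces the Gavish--Donoho threshold, which admits the equivalent closed form $\sigma_B\sqrt{(c+1/c)(c+\beta/c)}$ via the algebraic identity $(c+1/c)(c+\beta/c) = 2(1+\beta) + 8\beta/(1+\beta+\sqrt{1+14\beta+\beta^2})$. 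The factor $1/\mu_A$ in $\hat X_\lambda$ only serves to cancel the bias $\E(Y) = \mu_A X$ and does not affect the threshold.
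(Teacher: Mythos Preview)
Your approach is essentially the paper's: decompose $Y = \mu_A X + N$ with $N = \tilde A \odot X + B$, argue that $N$ is spectrally equivalent to $B$ alone, and then import the additive-noise threshold analysis of \cite{Gavish2014}. Where you differ is in how you neutralize the Hadamard term. The paper takes a shorter route than your variance-profile / Lindeberg comparison: it shows directly that $\|\tilde A \odot X\|_{\mathrm{op}} \to 0$ almost surely (Lemma~\ref{approx:lem}, via \cite{Nadakuditi2014}), and a resolvent-continuity lemma (Lemma~\ref{perturb:lem}) then transfers the isotropic Mar\u{c}enko--Pastur behavior from $B$ to $B + \tilde A \odot X$ in one stroke, covering both the bulk edge and the singular-vector overlaps. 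Your Stieltjes-transform comparison, as written, yields only the limiting ESD and would still need a separate isotropic argument for the overlaps; the operator-norm bound delivers both at once and avoids the inhomogeneous-variance machinery entirely. Once this reduction is in place, the two proofs finish identically, locating $\lambda^*$ as the unique crossing of the ``keep'' and ``discard'' per-spike MSE curves (Lemma~\ref{lemma:AMSE}).
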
 
Our second main result is the
existence of a unique asymptotically optimal shrinkage function $\eta^*$ in
(equation \eqref{shrinkage:eq}). 
We
calculate this shrinker explicitly:
\begin{thm}\label{thm:informal_shrinker}
  (Informal.) 
  Assume everything as in Theorem \ref{thm:informal_svht}.
  Then there exists a unique optimal (def. \ref{def:optimal_shrinker}) shrinker $\eta^*$ for the singular values of $Y$ 
 given by
 {\small
  \[
  \eta^*(y) = \begin{dcases} 
    \frac{\sigma_B^2}{y\mu_A}\sqrt{\left( \left(\frac{y}{\sigma_B}\right)^2-\beta-1\right)^2-4\beta}
    & y \geq \sigma_B(1+\sqrt{\beta}) \\
    0  &  y < \sigma_B(1+\sqrt{\beta}) \\
  \end{dcases}
\]
}
\end{thm}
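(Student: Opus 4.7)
The plan is to reduce the general contamination model \eqref{model:eq} to a pure additive-noise model, invoke known spiked-model asymptotics for the top singular values and vectors, and minimize a decoupled quadratic in the shrinker values pointwise to extract the closed form. To begin, I would decompose $Y = \mu_A X + M + B$, where $M = (A - \mu_A \mathbf{1}_{m\times n}) \odot X$. The matrix $M$ has independent, zero-mean entries with $\mathrm{Var}(M_{ij}) = \sigma_A^2 X_{ij}^2$, so $\|M\|_{\mathrm{op}} \le \|M\|_F$ and $\mathbb{E}\|M\|_F^2 = \sigma_A^2 \|X\|_F^2 = O(1)$ under the spiked normalization (fixed rank $r$ and bounded operator norm of $X$). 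Since $\|B\|_{\mathrm{op}} = \Theta_p(\sigma_B \sqrt{n})$ by Bai--Yin, a Weyl plus Davis--Kahan argument then shows that the top singular values and singular vectors of $Y$ agree with those of the pure additive surrogate $\tilde Y = \mu_A X + B$ up to $o(1)$ error.

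For $\tilde Y$, the spiked-model machinery used in \cite{Shabalin2013,Nadakuditi2014,gavish2017optimal} gives, for each signal singular value $x_i$ with $s_i := \mu_A x_i / \sigma_B > \beta^{1/4}$, that the corresponding data singular value $y_i$ converges almost surely to
\[
  y_\infty(x_i) = \sigma_B \sqrt{(s_i^2 + 1)(s_i^2 + \beta)}\,/\,s_i,
\]
and the empirical singular vectors have explicit asymptotic cosines $c_u(x_i), c_v(x_i)$ with the signal singular vectors $a_i, b_i$ (writing $X = \sum_j x_j a_j b_j^T$), while being asymptotically orthogonal to all other signal directions. Subcritical signal components produce $y_i$ concentrating at the bulk edge $\sigma_B(1 + \sqrt{\beta})$ with empirical vectors asymptotically orthogonal to the entire signal subspace.

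Expanding $L(\eta \mid X) = \|\sum_i \eta(y_i) \hat u_i \hat v_i^T - X\|_F^2$ and using the orthogonality just stated, every cross-term between indices $i \neq j$ vanishes in the limit, and the loss decouples into a sum of quadratics $\eta(y_\infty(x_i))^2 - 2 x_i c_u(x_i) c_v(x_i)\, \eta(y_\infty(x_i))$ plus an $\eta$-independent constant. The subcritical contributions force $\eta^*(y) = 0$ for $y < \sigma_B(1 + \sqrt{\beta})$; above the edge, pointwise minimization yields $\eta^*(y) = x(y)\, c_u(x(y))\, c_v(x(y))$, where $x(y)$ inverts $y_\infty$ by solving the quadratic $s^4 - ((y/\sigma_B)^2 - 1 - \beta) s^2 + \beta = 0$ for its larger root in $s^2$. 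Substituting the classical spiked-model cosine identity $c_u(x)\, c_v(x) = (1 - \beta/s^4)/\sqrt{(1 + \beta/s^2)(1 + 1/s^2)}$ and simplifying with the identity $y/\sigma_B = \sqrt{(s^2 + 1)(s^2 + \beta)}/s$ algebraically collapses the expression to the claimed closed form $\eta^*(y) = (\sigma_B^2/(y \mu_A)) \sqrt{((y/\sigma_B)^2 - \beta - 1)^2 - 4\beta}$.

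The main obstacle I anticipate is the first step: while $\|M\|_{\mathrm{op}}$ is easily bounded via its Frobenius norm, transferring control from $\tilde Y$ to $Y$ at the level of individual singular vectors requires a sin-theta argument whose error depends on spectral gaps of $\tilde Y$. These gaps degenerate as $\mu_A x_i$ approaches the BBP threshold $\sigma_B \beta^{1/4}$, so uniform control near the phase transition requires extra care. Once that is handled, the remaining steps amount to routine algebraic simplification and pointwise quadratic minimization using the by-now standard spiked-model toolkit.
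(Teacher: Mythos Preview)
Your approach is essentially the one taken in the paper: decompose $Y$ into effective signal $\mu_A X$, effective noise $B$, and a vanishing perturbation $M$, reduce to the pure additive spiked model, read off the limiting singular values and cosines, then minimize a decoupled quadratic pointwise. The pointwise minimization and the algebraic simplification you outline match the paper's exactly.

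The one technical difference is in how $M$ is dispatched. You propose Weyl plus Davis--Kahan, which works but (as you note) drags along a spectral-gap requirement near the phase transition. The paper instead absorbs $M$ into the noise and argues at the level of the resolvent quantities $H(w)=u'(wI-ZZ')^{-1}u$ and $Q(w)=v'(wI-Z'Z)^{-1}v$: since $\|M\|_{\mathrm{op}}\to 0$, these limits are unchanged when $B$ is replaced by $B+M$ (Lemma~\ref{perturb:lem}), and the Benaych-Georges spiked theorem (Lemma~\ref{lemma2}) then applies directly to $Y=\mu_A X+(B+M)$. This route carries no gap hypothesis and handles the near-threshold regime uniformly, resolving exactly the obstacle you flagged.

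One scaling glitch to fix: in the paper's formal signal model (Definition~\ref{signal:def}) the entries of $A$ and $B$ have variances $\sigma_A^2/n$ and $\sigma_B^2/n$, so $\|B\|_{\mathrm{op}}\to\sigma_B(1+\sqrt\beta)$, not $\Theta(\sigma_B\sqrt n)$. Under that normalization your Frobenius bound actually gives $\mathbb{E}\|M\|_F^2=(\sigma_A^2/n)\|X\|_F^2=O(1/n)$, which is what you need; as you wrote it, the signal $\mu_A X$ has bounded operator norm while the noise is order $\sqrt n$, so there would be no spikes to analyze.
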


We also discover that for each contamination model, there is a critical
signal-to-noise cutoff, below which $X$ cannot be reconstructed from the
singular values and vectors of $Y$. Specifically, 
let $\eta_0$ be the zero singular value shrinker, $\eta_0(y)\equiv 0$,
  so that $\hat{X}_{\eta_0}(Y) \equiv 0$. 
  Define the {\em critical signal level} for a shrinker $\eta$
  by 
  \[
    x^{critical}(\eta)= \inf_{x} \left\{x: L(\eta|X) < L(\eta_0|X) \right\}
  \]where $X=x\tilde{\V{u}}\tilde{\V{v}}'$ is an arbitrary rank-$1$ matrix with
singular value $x$. 
In other words, $x^{critical}(\eta)$ is the smallest
singular value of the target matrix, for which $\eta$ still
outperforms the
trivial zero shrinker $\eta_0$. 
As we show in Section \ref{sec:results}, a target matrix $X$ with a
singular value  below $x^{critical}(\eta)$ cannot be reliably reconstructed using $\eta$. 
The critical signal level for the optimal shrinker $\eta^*$ 
is
of special importance, since  a target matrix $X$ with a
singular value  below $x^{critical}(\eta^*)$ cannot be reliably reconstructed
using {\em any} shrinker $\eta$. 
Restricting attention to hard thresholds only, 
we define   $x^{critical}(\lambda)$, the critical level for
a hard threshold, similarly. Again, singular values of $X$ that fall below 
$x^{critical}(\lambda^*)$ cannot be reliably reconstructed using {\em any} hard
threshold.

Our third main result is the explicit calculation of these critical
signal levels:
\begin{thm} \label{thm:informal_critical}(Informal.) Assume everything 
  as in Theorem \ref{thm:informal_svht} and let $c$ be as in  Theorem
  \ref{thm:informal_svht}.
  Let $\eta^*$ be the optimal shrinker from Theorem \ref{thm:informal_shrinker} and let
  $\lambda^*$ be the optimal hard threshold from Theorem \ref{thm:informal_svht}.
  The critical signal levels for $\eta^*$ and $\lambda^*$ are  given by:  
  \begin{eqnarray*}
  x^{critical}(\eta^*)&=& (\sigma_B /\mu_A) \cdot \beta^\frac{1}{4} \\
  x^{critical}(\lambda^*)&=& (\sigma_B / \mu_A) \cdot c \,.
\end{eqnarray*}
  \end{thm}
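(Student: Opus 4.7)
\textbf{Proof plan for Theorem \ref{thm:informal_critical}.}
The plan is to reduce everything to the rank-$1$ case and invoke the asymptotic spiked-model formulas that drive Theorems \ref{thm:informal_svht}--\ref{thm:informal_shrinker}. Writing the target as $X = x\tilde{\V{u}}\tilde{\V{v}}'$, let $y(x)$, $c_u(x)$, $c_v(x)$ denote the almost-sure asymptotic limits of the top data singular value $y_1$ and of the cosines $|\langle \V{u}_1,\tilde{\V{u}}\rangle|$ and $|\langle \V{v}_1,\tilde{\V{v}}\rangle|$. I would lift from the technical sections of the paper the following: under \eqref{model:eq} with bounded-rank $X$, the problem reduces asymptotically to a rectangular spiked model with effective signal $\mu_A x$ and noise level $\sigma_B$ (the Hadamard cross-term $(A-\mu_A)\odot X$ is of lower order in operator norm), so above the BBP transition $\mu_A x = \sigma_B\beta^{1/4}$ one has
{\small
\[
  y(x)^2 \;=\; \frac{(\mu_A^2 x^2+\sigma_B^2)(\mu_A^2 x^2+\beta\sigma_B^2)}{\mu_A^2 x^2},\qquad c_u(x)c_v(x)>0,
\]
}
while below the transition $y(x)=\sigma_B(1+\sqrt{\beta})$ and $c_u(x)c_v(x)=0$.

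With these inputs the rank-$1$ MSE of any shrinker is $L(\eta|X) = x^2 - 2x\,\eta(y_1)c_u(x)c_v(x) + \eta(y_1)^2$, so $L(\eta_0|X)=x^2$ and strict improvement over $\eta_0$ is equivalent to $0<\eta(y_1)<2x\, c_u(x)c_v(x)$. For $\eta^*$, Theorem \ref{thm:informal_shrinker} gives $\eta^*(y)=0$ exactly on $y<\sigma_B(1+\sqrt{\beta})$. Below the BBP transition $y_1\to\sigma_B(1+\sqrt{\beta})$ and $\eta^*(y_1)\to 0$, so $L(\eta^*|X)\to L(\eta_0|X)$; above it both $\eta^*(y_1)$ and $c_u c_v$ are positive and the optimality of $\eta^*$ yields strict improvement. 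Hence $x^{critical}(\eta^*)$ is the BBP value. Setting $y(x)=\sigma_B(1+\sqrt{\beta})$ with $s:=\mu_A x/\sigma_B$ gives $(s^2+1)(s^2+\beta)/s^2=(1+\sqrt{\beta})^2$, which factors as $(s^2-\sqrt{\beta})^2=0$; thus $s=\beta^{1/4}$ and $x^{critical}(\eta^*)=(\sigma_B/\mu_A)\beta^{1/4}$.

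For $\lambda^*$, the estimator is asymptotically nonzero iff $y(x)>\lambda^*$, and the optimality of $\lambda^*$ in Theorem \ref{thm:informal_svht} is exactly the statement that at $y(x)=\lambda^*$ inclusion of the top component neither helps nor hurts (i.e., $L(\lambda^*|X)=L(\eta_0|X)$), while for $y(x)>\lambda^*$ it strictly helps. Consequently $x^{critical}(\lambda^*)$ is the unique upper-branch root of $y(x)=\lambda^*$. Squaring, $\lambda^{*\,2}=\sigma_B^2 (c^2+1)(c^2+\beta)/c^2$, and matching to the formula for $y(x)^2$ reduces to $(s^2+1)(s^2+\beta)/s^2=(c^2+1)(c^2+\beta)/c^2$. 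The two roots in $s^2$ have product $\beta$ by Vieta, so they are $s^2=c^2$ and $s^2=\beta/c^2$; since $c>\beta^{1/4}$, the second lies below the BBP threshold $\beta^{1/4}$ and is inadmissible. The admissible root $s=c$ gives $x^{critical}(\lambda^*)=(\sigma_B/\mu_A)\,c$.

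The main obstacle is not this short algebra but the justification that the rectangular spiked-model formulas for $y(x)$, $c_u(x)$, $c_v(x)$ transfer to the Hadamard-plus-additive contamination \eqref{model:eq} with effective parameters $(\mu_A x,\sigma_B)$; and, for the $\lambda^*$ calculation, that $\lambda^*$ is indeed characterized by the "neither helps nor hurts" condition at $y(x)=\lambda^*$. Both are already what must be in place to prove Theorems \ref{thm:informal_svht} and \ref{thm:informal_shrinker}, so I would invoke them as black boxes rather than re-derive them here.
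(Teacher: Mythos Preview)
Your proposal is correct and follows essentially the same route as the paper: both parts reduce to the rank-$1$ asymptotics supplied by the fundamental spiked-model lemma (displacement and rotation formulas with effective signal $\mu_A x$ and noise $\sigma_B$), and the $\lambda^*$ case is handled via the ``neither helps nor hurts'' intersection characterization that already appears in the derivation of Theorem~\ref{thm:informal_svht}. One small framing difference worth noting: for $x^{critical}(\eta^*)$ you argue directly from the explicit form of $\eta^*$ (it vanishes at and below the bulk edge, hence coincides with $\eta_0$ below BBP), whereas the paper instead argues that \emph{any} shrinker which is nonzero below the bulk edge incurs unbounded AMSE from the infinitely many bulk singular values, forcing the critical level to be the BBP point; your argument is the more direct of the two and is perfectly valid here, while the paper's detour has the side benefit of showing no shrinker whatsoever can beat $\eta_0$ below BBP.
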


Finally, one might ask what the improvement is in terms of the mean square error 
that is   guaranteed  by using the optimal shrinker and optimal threshold. 
As discussed below, existing methods are either infeasible in terms of running
time on medium and large matrices, or lack a theory that can predict the
reconstruction mean square
error. For lack of a better candidate, we compare the optimal shrinker and
optimal threshold to the default method, namely, TSVD. 

\begin{thm} \label{improve:thm} (Informal.)
  Consider $\beta=1$, and denote the worst-case mean square error of TSVD, $\eta^*$
  and $\lambda^*$ by $M_{TSVD}$, $M_{\eta^*}$ and $M_{\lambda^*}$, respectively,
  over a target matrix of low rank $r$.
  Then 
  {\small
    \begin{eqnarray*}
      M_{TSVD}&=& \left(\frac{\sigma_B}{\mu_A} \right)^2 5r\\
      M_{\eta^*} &=& \left(\frac{\sigma_B}{\mu_A} \right)^2 2r\\
      M_{\lambda^*} &=&  \left(\frac{\sigma_B}{\mu_A} \right)^2 3r  \,.
    \end{eqnarray*}
  }
\end{thm}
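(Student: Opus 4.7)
The plan is to reduce the worst-case Frobenius loss over rank-$r$ targets to $r$ times the supremum of a one-dimensional per-spike MSE function, and then specialize to $\beta = 1$. First, I would invoke the asymptotic framework underpinning Theorems \ref{thm:informal_svht}--\ref{thm:informal_critical}: for a rank-$r$ target $X = \sum_{i=1}^{r} x_i \tilde{\V{u}}_i \tilde{\V{v}}_i'$ observed via the general contamination model, the $i$-th data singular vector pair is asymptotically supported on the direction of the $i$-th spike with BBP-type cosines, and $\eta^{*}$, $\lambda^{*}$, and the rank-$r$ TSVD all annihilate (or do not reach) bulk singular values. Consequently the asymptotic loss decouples as $L(\eta|X) = \sum_{i=1}^{r} M(x_i;\eta) + o(1)$ with
\[
  M(x;\eta) = \eta(y(x))^2 - 2 x\, \eta(y(x))\, c_u(x) c_v(x) + x^2,
\]
where $y(x)$ is the deterministic asymptotic data singular value and $c_u(x), c_v(x)$ are the asymptotic cosines between data and signal singular vectors. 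The worst case over rank-$r$ targets is then $r \cdot \sup_{x \geq 0} M(x;\eta)$, since the $x_i$ can be chosen independently to maximize each summand.

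Next, I would set $\beta = 1$ and plug in the explicit formulas. Writing $s = \sigma_B/\mu_A$, above the BBP transition $x > s$ one has $y(x)/\mu_A = x + s^2/x$ and $c_u(x)c_v(x) = 1 - s^2/x^2$, while below it the cosines vanish and $y(x) \equiv 2\sigma_B$. Substituting the TSVD shrinker $\eta(y) = y/\mu_A$ yields $M_{TSVD}(x) = 2s^2 + 3s^4/x^2$ above threshold and $4s^2 + x^2$ below; both approach $5s^2$ as $x \to s$, which is the supremum. For the optimal shrinker, a short algebraic check using Theorem \ref{thm:informal_shrinker} (via the identity $(y(x)^2/\sigma_B^2 - 2)^2 - 4 = (\mu_A^2 x^2/\sigma_B^2 - \sigma_B^2/(\mu_A^2 x^2))^2$ at $\beta=1$) shows $\eta^{*}(y(x)) = x\, c_u(x) c_v(x) = x - s^2/x$, which is the unconstrained minimizer of $M(x;\cdot)$; hence $M_{\eta^{*}}(x) = x^2(1 - c_u^2 c_v^2) = 2s^2 - s^4/x^2$ above threshold and $x^2 \leq s^2$ below, with supremum $2s^2$. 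For the optimal hard threshold, solving $y(x) = \lambda^{*}$ at $\beta=1$ gives $x + s^2/x = s(c + 1/c)$ whose relevant root is $x = s c$ with $c = \sqrt{3}$; above this cutoff the MSE coincides with $M_{TSVD}$ and equals $2s^2 + 3s^2/c^2 = 3s^2$ at $x = sc$, while below it the contribution is $x^2 \leq s^2 c^2 = 3s^2$. Each supremum, multiplied by $r$, produces the three claimed values.

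The main obstacle is the first step—rigorously justifying the per-spike decomposition of the asymptotic MSE and the independent optimization over the $x_i$. This requires showing that off-diagonal cosines between the $i$-th data singular vector pair and the $j$-th signal pair vanish in the limit for $i \neq j$, which is a standard consequence of the spiked-model analysis of \cite{Shabalin2013, Nadakuditi2014, gavish2017optimal} once it has been extended to the general contamination model \eqref{model:eq} (precisely where the novel technical content of this paper enters). Once the decomposition is granted, the remainder is a routine one-variable optimization in each of the three cases, made particularly clean by the specialization $\beta = 1$ and the fact that the optimal shrinker saturates the Cauchy–Schwarz lower bound $M(x;\cdot) \geq x^2(1 - c_u^2 c_v^2)$.
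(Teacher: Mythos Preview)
Your proposal is correct and follows essentially the same route as the paper: invoke the per-spike decomposition of the asymptotic MSE (the paper's Lemma~\ref{lemma:decomp}/Lemma~\ref{lemma:AMSE}), then maximize the one-dimensional per-spike loss for each of the three estimators at $\beta=1$ and multiply by $r$. Your treatment is in fact slightly more careful than the paper's own argument---you explicitly handle the below-BBP case for TSVD (obtaining $4s^2+x^2$) and verify continuity of the two branches at $x=s$, whereas the paper simply plugs $x^{\mathrm{worst}}=\sigma_B/\mu_A$ into the above-threshold AMSE formula.
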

Indeed, the optimal shrinker offers a significant performance improvement
(specifically, an improvement of $3r(\sigma_B/\mu_A)^2$,  over
  the TSVD baseline.

\begin{figure}[h]
  \centering
  \includegraphics[width=.45\linewidth]{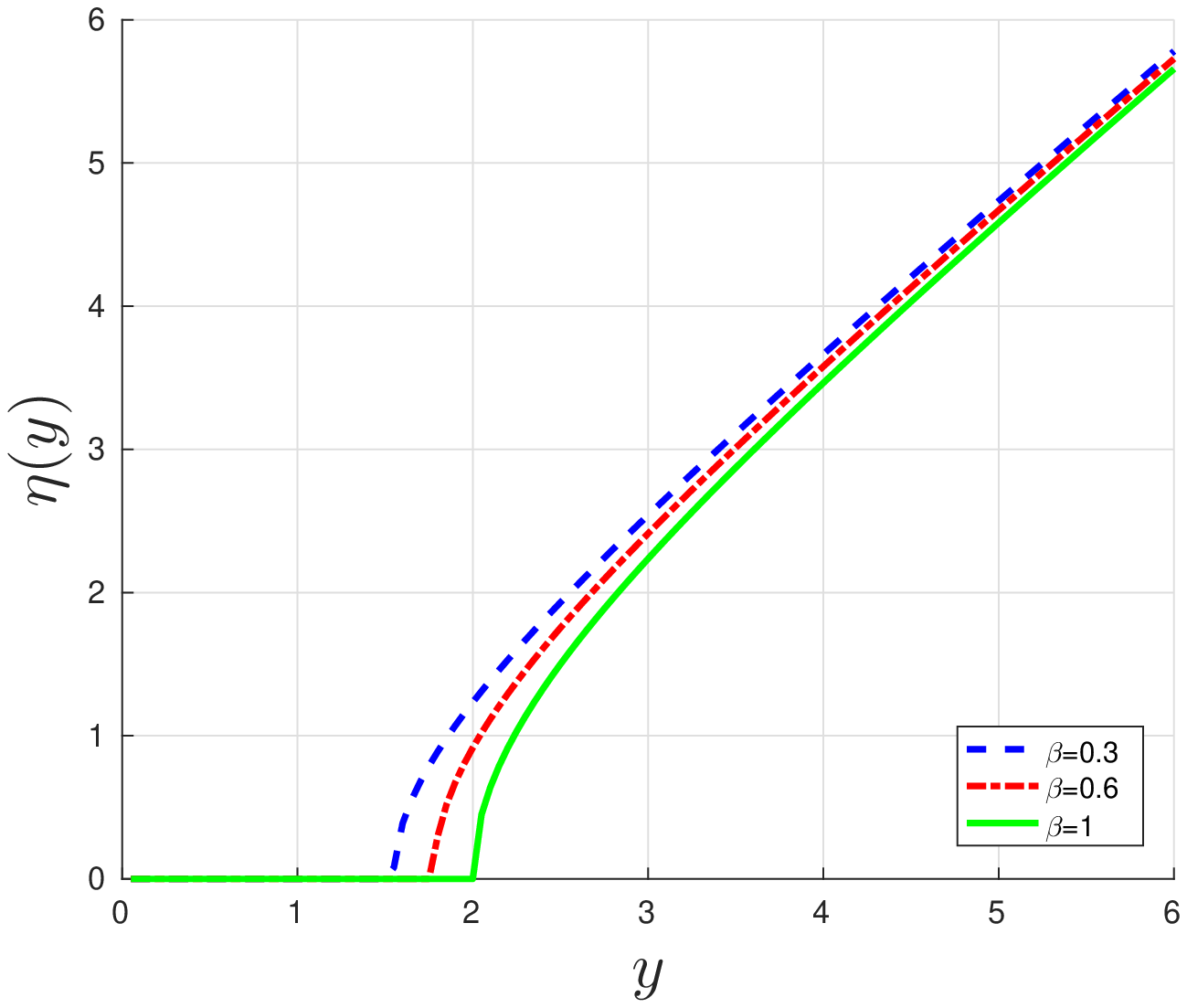}
  \includegraphics[width=.45\linewidth]{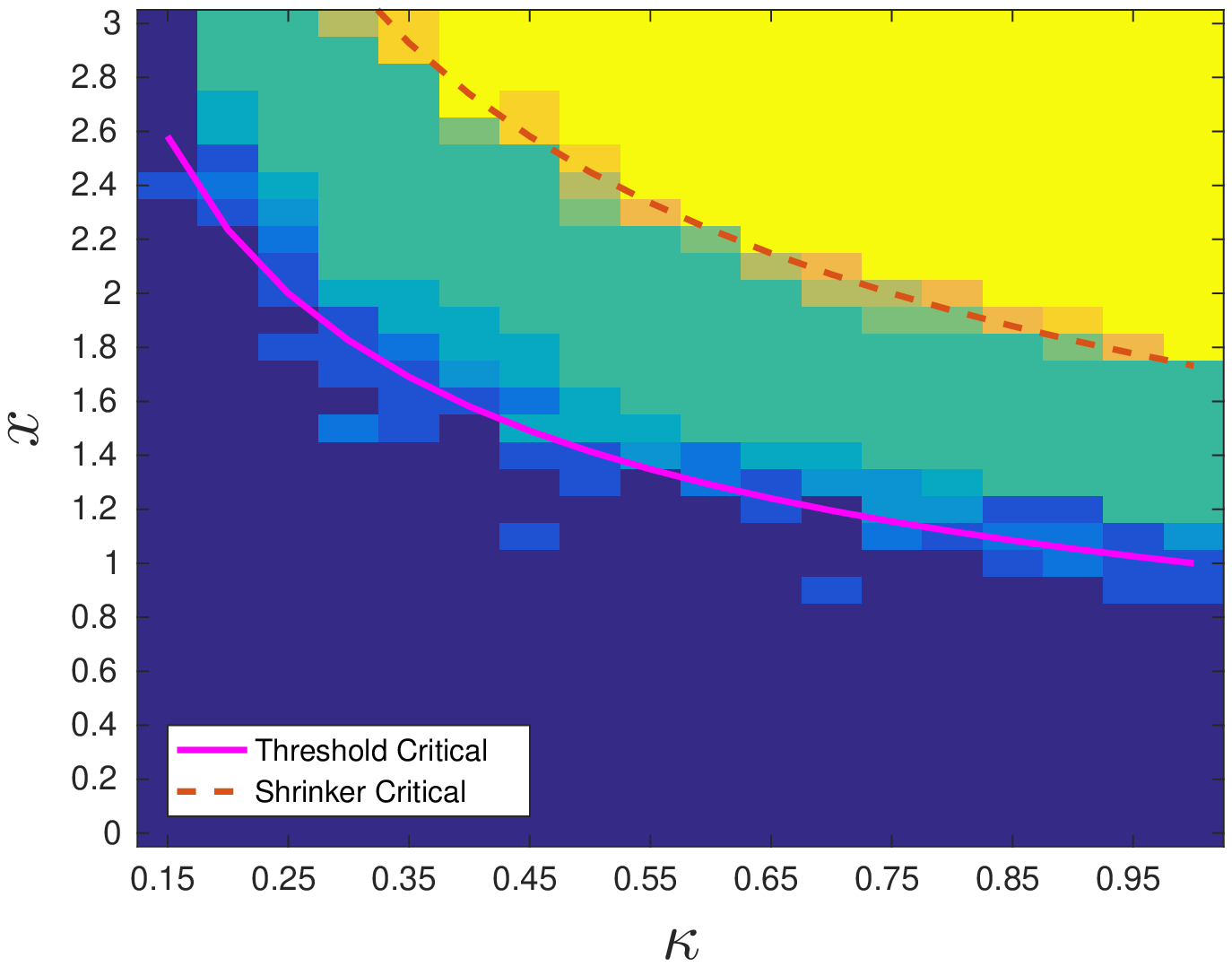}
  \caption{Left: Optimal shrinker for additive noise and missing-at-random
    contamination. Right: Phase plane for critical signal levels, see Section
  \ref{sec:verifying_results}, Simulation 2.}
  \label{fig:shrinker}
\end{figure}

Our main results allow easy calculation of the optimal threshold, optimal
shrinkage and signal-to-noise cutoffs 
for various specific contamination models. For example:

\begin{enumerate} 
   \item {\bf Additive noise and missing-at-random.} 
     Let $X$ be an $m$-by-$n$ low-rank matrix. Assume that some entries are
     completely missing and the rest suffer white additive noise. 
     Formally, we observe the contaminated matrix 
    \[ 
        Y_{i,j} = \begin{cases} X_{i,j} + Z_{i,j}&
	\text{w.p. }\kappa\\ 0  & \text{w.p. }1-\kappa\end{cases}\,,
      \]  
      where $Z_{i,j}\iid (0,\sigma^2)$, namely, follows an unknown distribution with mean
      $0$ and variance $\sigma^2$.
      Let $\beta = m/n$. 
      Theorem \ref{thm:informal_svht} implies that in this case, 
      the optimal hard threshold for the singular values of $Y$ is 
    \[
      \lambda^* =\sqrt{\sigma^2\kappa 
      \left(c+1/c\right )\left(c+\beta/c\right)} 
        \]
	where
        {\small  $c = \sqrt{ 1+\beta+\sqrt{1+14\beta+\beta^2}}/\sqrt{2}$}.
In other words, the optimal location (w.r.t mean square error) to truncate the singular values of $Y$, in
order to recover $X$, is given by $\lambda^*$.
The optimal shrinker from Theorem \ref{thm:informal_shrinker} 
for this contamination mode may be calculated similarly,
and is shown in Figure \ref{fig:shrinker}, left panel. 
By Theorem \ref{improve:thm}, 
the improvement in mean square error obtained by using the optimal shrinker, 
over the TSVD baseline, is $3r\sigma^2/\kappa$, quite a
significant improvement.
  \item{\bf Additive noise and corrupt-at-random.} 
     Let $X$ be an $m$-by-$n$ low-rank matrix. Assume that some entries are
     irrecoverably corrupt (replaced by random entries),
     and the rest suffer white additive noise. 
     Formally,
    \[ 
      Y_{i,j} = \begin{cases} X_{i,j} + Z_{i,j}&
	\text{w.p. }\kappa\\ W_{i,j}  & \text{w.p. }1-\kappa\ \end{cases}\,.  
    \]  
    Where $Z_{i,j}\iid (0,\sigma^2)$, 
    $W_{i,j}\iid (0,\tau^2)$, and $\tau$ is typically large.
    Let $\tilde{\sigma} = \sqrt{\kappa\sigma^2 +(1-\kappa)\tau^2}$. 
    The optimal shrinker, which should be applied to the 
    singular values of $Y$, is given by:
    {\small
\[
  \eta^*(y) = \begin{dcases} 
    \tilde{\sigma}^2/(y\kappa)\sqrt{\left( 
      \left(y/\tilde{\sigma}\right)^2-\beta-1\right)^2-4}  & y \geq \tilde{\sigma}(1+\sqrt{\beta}) \\
    0  &  y < \tilde{\sigma}(1+\sqrt{\beta}) \\
  \end{dcases}
  \,.
\]
}
By Theorem \ref{improve:thm}, 
the improvement in mean square error, obtained by using the optimal shrinker, 
over the TSVD baseline, is $3r(\kappa\sigma^2+(1-\kappa)\tau^2)/\kappa^2$.  %

\end{enumerate}

\subsection{Related Work}
The general data contamination model we propose includes as special cases
several modes extensively studied in the literature, including
missing-at-random and outliers. While it is impossible to propose a complete list
of algorithms to handle such data, we offer a few pointers, organized around the notions of
robust principal component analysis (PCA) and matrix completion. 
To the best of our knowledge, the precise effect of general data contamination on the
SVD (or the closely related PCA) has not been documented thus far. 
The approach we propose, based on careful manipulation of the data singular
values, enjoys three distinct advantages. One, its running time is not
prohibitive; indeed, it involves a small yet important modification on top of the
SVD or TSVD, so that it is available whenever the SVD is available.
Two, it is well understood and its performance (say, in mean square
error) can be reliably predicted  by
the available theory. Three, to the best of our knowledge, 
none of the approaches below have become mainstream,
and most practitioners still turn to the SVD, even in the presence of data
contamination. Our approach 
can easily be used in practice, as it relies on the well-known and very widely used SVD, and can be
implemented as a simple modification on top of the existing SVD implementations.

{\bf Robust Principle Component Analysis (RPCA).} 
    In RPCA, one assumes $Y=X+W$ where $X$ is the low rank target
    matrix and $W$ is a sparse outliers matrix. 
    Classical approaches such as influence functions \cite{huber2011robust}, multivariate
    trimming \cite{gnanadesikan1972robust} and random sampling techniques
    \cite{fischler1981random} lack a formal theoretical framework and are not
    well understood. 
       More modern approaches based on convex optimization 
    \cite{Wright2009,Candes2011} proposed reconstructing $X$ from $Y$ via the nuclear norm
    minimization 
    \[ \underset{X}{min} \norm{X}_*+\lambda\norm{Y-X}_1\,, \]
    whose runtime and memory requirements are both prohibitively large in medium and large matrices.

    {\bf Matrix Completion.} 
    There are numerous heuristic approaches for data analysis in the presence of missing values \cite{buuren2011mice,schafer1997analysis,rubin1996multiple}.
  To the best of our knowledge, there are no formal guarantees of their
  performance. 
  When the target matrix is known to be low rank, the reconstruction problem is
  known as matrix completion.  \cite{Candes2010,Candes2011,candesPlan2010matrix}
  and numerous other authors have shown that a semi-definite program may be used
  to stably recover the target matrix, even in the presence of additive noise. 
  Here too, the runtime and memory requirements are both prohibitively large in
  medium and large matrices, making these algorithms infeasible in practice.

    \section{A Unified Model for Uniformly Distributed Contamination} \label{sec:modes}

Contamination modes encountered in practice are best described by a
combination of primitive modes, shown in Table 
\ref{tab:primitive} below. 
These primitive contamination modes fit into a single template:
\begin{defn} \label{f:def}
  {\em
  Let $A$ and $B$ be two random variables, and assume that all moments of
$A$ and $B$ are bounded. Define the {\em contamination link
function}
\[
  f_{A,B}(x)= Ax+B\,.
\]
Given a matrix $X$, define 
the corresponding contaminated matrix $Y$ with entries
\begin{eqnarray} \label{eqn:framework} Y_{i,j} \indep f_{A,B}(X_{i,j}) \,.
\end{eqnarray} 
}
\end{defn}

Now observe that each of the primitive modes above corresponds to a different choice
of random variables $A$ and $B$, as shown in Table \ref{tab:primitive}.
Specifically, each of the primitive modes is described by a different assignment
to $A$ and $B$. We employ three different random variables in these assignments:
$Z\iid(0,\sigma^2/n)$, a random variable describing multiplicative or additive
noise; $W\iid(0,\tau^2/n)$, a random variable describing a large ``outlier''
measurement; and $M\iid Bernoulli(\kappa)$ describing a random choice of
``defective'' entries, such as a missing value, an outlier and so on.

\begin{table*}[h!] \centering {\renewcommand{\arraystretch}{1.2}\caption{
      \em \small Primitive modes fit into the model \eqref{eqn:framework}.  By
      convention, $Y$ is $m$-by-$n$,
      $Z\iid (0,\sigma^2/n)$ denotes a noise random variable, 
      $W\iid (0,\tau^2/n)$ denotes an outlier
  random variable and $M\iid Bernoulli(\kappa)$ is a contaminated entry
random variable. }
\label{tab:primitive}
  \begin{tabular}{|l|l|c|c|c|}\hline mode & model & A & B & levels \\ 
    \hline 
    i.i.d additive noise & $Y_{i,j}=X_{i,j}+Z_{i,j}$ & $1$ & $Z$ 
    & $\sigma$ \\ 
    \hline 
    i.i.d multiplicative noise & $Y_{i,j}=X_{i,j}\,Z_{i,j}$ & $Z$ & $0$ 
    & $\sigma$ \\
    \hline
    missing-at-random & $Y_{i,j} = M_{i,j} \, X_{i,j}$ & $M$ & $0$ 
    & $\kappa$ \\
    \hline
    outliers-at-random & $Y_{i,j} = X_{i,j} + M_{i,j} W_{i,j}$ & $1$ & $MW$
    & $\kappa$,$\tau$ \\
    \hline 
    corruption-at-random & $Y_{i,j} = M_{i,j}X_{i,j} +
    (1-M_{i,j})W_{i,j}$ & $M$ & $(1-M)W$ 
    & $\kappa$,$\tau$ \\ 
    \hline \end{tabular}}
   \end{table*}

Actual datasets rarely demonstrate a single primitive contamination mode. 
To adequately describe contamination observed in practice,
 one usually needs to
combine two or more of the primitive contamination modes into a composite mode. 
While
there is no point in enumerating all possible combinations,
Table \ref{tab:composite} offers a few notable composite examples, 
using 
the framework \eqref{eqn:framework}.  Many other examples are possible of
course.

\begin{table*} \centering {\renewcommand{\arraystretch}{1.2} 
    \caption{\em \small Some examples of composite
  contamination modes and how they fit into the model \eqref{eqn:framework}.
$Z$,$W$,$M$ are the same as in Table \ref{tab:primitive}.}
\label{tab:composite}
  \begin{tabular}{|c|c|c|c|}\hline mode  & A & B & levels\\ \hline Additive
      noise and
    missing-at-random & 
    $M$ &  $ZM$ & $\sigma$,$\kappa$ \\ \hline Additive noise and corrupt-at-random & 
    $M$ & $ZM + W(1-M)$ & $\sigma$,$\kappa$,$\tau$\\ \hline multiplicative noise  and  corrupt-at-random & 
  $ZM$ & $W(1-M)$ & $\sigma$,$\kappa$,$\tau$ \\ \hline Additive noise
    and outliers& 
  $1$ & $Z+W(1-M)$ & $\sigma$,$\kappa$,$\tau$ \\ \hline \end{tabular}}
\end{table*}

\section{Signal Model} \label{sec:signal} 

Following \cite{Shabalin2013} and \cite{Gavish2014}, 
as we move toward our formal results 
we are considering an
asymptotic model inspired by Johnstone's Spiked Model \cite{Johnstone2001}.
Specifically, we are considering a 
sequence of increasingly larger data target matrices $X_n$, and 
corresponding data matrices 
$Y_n \iid f_{A_n,B_n}(X_n)$. 
We make the following assumptions regarding the matrix sequence $\{X_n\}$:

\begin{enumerate}
  \item[\bf A1] {\em Limiting aspect ratio:}  The matrix dimension $m_n \times n$ sequence converges:  $m_n/n\to \beta$ as $n\to\infty$. To simplify the results, we assume $0< \beta\leq 1$.
      \item[\bf A2] {\em Fixed signal column span:} 
        Let the rank $r>0$ be fixed and choose a vector $\V{x}\in\R^r$ with
        coordinates $\V{x}=(x_1,\ldots x_r)$ such that $x_1>\ldots >x_r>0$.
        Assume that for all $n$ 
        \[
          X_n = \tilde{U}_n \, diag(x_1,\ldots,x_r) \tilde{V}_n
        \]
        is an arbitrary singular value decomposition of $X_n$, 
      \item[\bf A3] {\em Incoherence of the singular vectors of $X_n$:} We make one of the
        following two assumptions regarding the singular vectors of $X_n$:
        \begin{itemize}
          \item[\bf A3.1] $X_n$ is random with an orthogonally invariant distribution.
              Specifically, $\tilde{U}_n$ and $\tilde{V}_n$, which follow the Haar distribution
            on orthogonal matrices of size $m_n$ and $n$, respectively. 
          \item[\bf A3.2] The singular vectors of $X_n$ are non-concentrated. Specifically, 
         each left singular vector $\Vt{u}_{n,i}$ of $X_n$ (the $i$-th column of
         $\tilde{U}_n$) and each right singular vector $\Vt{v}_{n,j}$ of $X_n$ (the
         $j$-th column of $\tilde{V}_n$) satisfy\footnote{The incoherence assumption is widely used in related literature
        \cite{Chand2011,Nadakuditi2014,CandesSVD2010}, and asserts that the singular
        vectors are spread out so $X$ is not sparse and does not share singular
      subspaces with the noise.}
         \begin{eqnarray*}
             \norm{\Vt{u}_{n,i}}_\infty \leq C \frac{\log^D(m_n)}{\sqrt{m_n}}
             \qquad\text{and} \qquad
             \norm{\Vt{v}_{n,j}}_\infty \leq C \frac{\log^D(n)}{\sqrt{n}}
         \end{eqnarray*}
         for any $i,j$ and fixed constants $C,D$.
        \end{itemize}
    \end{enumerate}

    \begin{defn} \label{signal:def} {\bf (Signal model.)}
  Let $A_n\iid (\mu_A,\sigma_A^2/n)$ 
  and $B_n\iid (0,\sigma_B^2/n)$ have bounded moments.
  Let $X_n$ follow assumptions {\bf [A1]}--{\bf[A3]} 
  above. We say that 
   the matrix sequence
   $Y_n=f_{A_n,B_n}(X_n)$ follows our signal model, where
   $f_{A,B}(X)$ is as in Definition \ref{f:def}. 
  We further denote 
  $X_n = \sum_{i=1}^r x_i \Vut_{n,i} \Vvt_{n,i}$ for the singular value
  decomposition of $X_n$ and 
  $Y_n = \sum_{i=1}^m y_{n,i} \Vu_{n,i} \Vv_{n,i}$
  for the singular value decomposition of $Y_n$.
\end{defn}

\section{Main Results} \label{sec:results}
Having described the contamination and the signal model, we can now formulate our
main results.  All proofs are deferred to the Supporting Information. Let $X_n$ and $Y_n$ follow our signal model, Definition
\ref{signal:def}, and write $\V{x}=(x_1,\ldots,x_r)$ for the non-zero singular
values of $X_n$. 
For a shrinker $\eta$,   we write
\[
L_\infty(\eta|\V{x}) \aseq  \lim_{n\to\infty}\norm{\hat{X}_n(Y_n)-X_n}_F^2.
  \]
  assuming the limit exists almost surely. The special case of hard
  thresholding at $\lambda$ is denoted as   $L_\infty(\eta|\V{x})$. 

  \begin{defn} \label{def:optimal_shrinker}
    {\bf Optimal shrinker and optimal threshold.} 
    A shrinker $\eta^*$ is called {\em optimal} 
    if 
    \[
      L_\infty(\eta|\V{x}) \leq L_\infty(\eta|\V{x}) 
    \]
    for any shrinker $\eta$, any 
    $r\geq 1$ and any $\V{x}=(x_1,\ldots,x_r)$. Similarly, 
    a threshold $\lambda$ is
    called optimal if $ L_\infty(\lambda^*|\V{x}) \leq L_\infty(\lambda|\V{x}) $
for any threshold $\lambda$, any 
    $r\geq 1$ and any $\V{x}=(x_1,\ldots,x_r)$.
  \end{defn}

  With these definitions, our main results Theorem \ref{thm:informal_shrinker} and
  Theorem \ref{thm:informal_svht} become formal. To make 
  Theorem \ref{thm:informal_critical} formal, we need the following lemma and definition.

  \begin{lemma} \label{lemma:decomp}
    {\bf Decomposition of the asymptotic mean square error.}
    Let $X_n$ and $Y_n$ follow our signal model (Definition
    \ref{signal:def}) and write $\V{x}=(x_1,\ldots,x_r)$ for the non-zero singular
    values of $X_n$, and let $\eta$ be the optimal shrinker.  
    Then
 the limit $L_\infty(\eta|\V{x})$ a.s. exists, and
    {\small
    $
       L_\infty(\eta|\V{x}) \aseq \sum_{i=1}^r  L_1(\eta|x)
     $},
where
{\small
\begin{align*}
	 L_1(\eta|x)   = 
	\begin{dcases*}
		x^2  \left( 1 - \frac {(t^4-\beta)^2}{(t^4+\beta t^2)(t^4+t^2)}  \right)
		& $t\geq \beta^{\frac{1}{4}}$\\
		x^2 & $t<  \beta^{\frac{1}{4}}$\\
	\end{dcases*}
\end{align*}
}
where $t = (\mu_A\cdot x)/\sigma_B$.
    Similarly, for a threshold $\lambda$ we have
    {\small
    $
      L_\infty(\lambda|\V{x}) = \sum_{i=1}^r  L_1(\lambda|x)
    $
  }
   with
    {\small
\begin{align*}
	L_1(\lambda|x)   = 
	\begin{dcases*}
	  \left(\frac{\sigma_B}{\mu_A} \right) ^2 \left( \left( t+\frac{1}{t}\right) \left(t +\frac{\beta }{t}\right)- 
                \left(t^2 - \frac{2\beta  }{t^2} \right)\right)
&$ \mu_A x\geq x(\lambda) $\\
	x^2 & $\mu_A x< x(\lambda)$\\
	\end{dcases*}
\end{align*}
}

Where
{\small
  \begin{eqnarray}\label{eq:x_of_y}
  x(y)   =
	\begin{dcases*}
          (\sigma_B/\sqrt{2}\mu_A) \sqrt{\left(y/\sigma_B\right)^2
                  -\beta-1+
                \sqrt{\left(1+\beta-\left(y/\sigma_B\right)^2\right)^2-4\beta}}
		  &$t\geq \beta^{\frac{1}{4}}$\\
	0 & $ t < \beta^{\frac{1}{4}} $\\
	\end{dcases*}
\end{eqnarray}
}

\end{lemma}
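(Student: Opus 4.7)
The strategy is to reduce the general contamination model to a classical additive-noise spiked model, so that the established singular value/vector asymptotics for such models can be applied termwise to the squared-error expansion.

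First, I would rewrite $Y_n = \mu_A X_n + N_n$, where $N_n = (A_n-\mu_A)\odot X_n + B_n$ has centered independent entries. The term $B_n$ is already a pure i.i.d.\ noise matrix with entry variance $\sigma_B^2/n$. Under the incoherence assumption \textbf{[A3]}, the entries of $X_n$ are uniformly $O(\mathrm{polylog}/\sqrt{m_n n})$, so the signal-dependent part $(A_n-\mu_A)\odot X_n$ has mean zero and entrywise variance $o(1/n)$; a matrix Bernstein or $\epsilon$-net argument then shows its operator norm is $o(1)$ almost surely. Consequently $Y_n$ is asymptotically a low-rank perturbation of $\mu_A X_n$ by an i.i.d.\ noise matrix with entry variance $\sigma_B^2/n$, with effective spike strengths $\mu_A x_1>\cdots>\mu_A x_r$.

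Second, I would invoke the spiked-model asymptotics of Benaych-Georges--Nadakuditi, as used in \cite{Shabalin2013,Nadakuditi2014,Gavish2014,gavish2017optimal}. Setting $t_i=\mu_A x_i/\sigma_B$, whenever $t_i>\beta^{1/4}$ the $i$-th data singular value $y_{n,i}$ converges a.s.\ to a deterministic $y(x_i)$ with $(y/\sigma_B)^2=(t_i^2+1)(1+\beta/t_i^2)$, which inverts to the stated formula for $x(y)$. In the same regime, $|\langle \V{u}_{n,i},\Vut_{n,i}\rangle|\to c_u(x_i)$ and $|\langle \V{v}_{n,i},\Vvt_{n,i}\rangle|\to c_v(x_i)$ with
\[
c_u^2 = \frac{t_i^4-\beta}{t_i^4+\beta t_i^2}, \qquad c_v^2 = \frac{t_i^4-\beta}{t_i^4+t_i^2},
\]
while the cross-cosines $|\langle \V{u}_{n,i},\Vut_{n,j}\rangle|$ for $i\neq j$ vanish a.s. For subcritical $t_i<\beta^{1/4}$, $y_{n,i}$ converges to the bulk edge $\sigma_B(1+\sqrt\beta)$ and $c_u,c_v\to 0$.

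Third, I would expand the squared loss. Using the orthogonality of distinct singular vectors of $Y_n$ and the vanishing of the cross-cosines, the cross-terms in
\[
\Big\lVert\sum_i \eta(y_{n,i})\V{u}_{n,i}\V{v}_{n,i}' - \sum_i x_i \Vut_{n,i}\Vvt_{n,i}'\Big\rVert_F^2
\]
tend to zero a.s., leaving the per-spike contributions $\eta(y(x_i))^2+x_i^2-2x_i\eta(y(x_i))c_u(x_i)c_v(x_i)$ in the supercritical case and $x_i^2$ in the subcritical case (where $\eta^*$ evaluates to $0$). Substituting the optimal shrinker from Theorem \ref{thm:informal_shrinker}, which is precisely $\eta^*(y(x))=x\,c_u(x)c_v(x)$ on the supercritical range and $0$ elsewhere, collapses each supercritical summand to $x_i^2(1-c_u^2 c_v^2)$; plugging in the explicit $c_u^2,c_v^2$ in terms of $t_i$ gives the stated $L_1(\eta|x)$. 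The hard-threshold case is analogous: in the supercritical branch $\hat X_\lambda$ acts as $\eta(y)=y/\mu_A$, and direct algebraic substitution of the asymptotic $y(x)$ into $\eta^2+x^2-2x\eta c_u c_v$ yields the displayed expression for $L_1(\lambda|x)$.

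The main obstacle is the reduction step: showing rigorously that the non-identically-distributed, signal-dependent noise $(A_n-\mu_A)\odot X_n$ does not perturb the top-$r$ singular values or principal angles at leading order, and that the universality of the BBP transition survives this heteroscedasticity under either \textbf{[A3.1]} or \textbf{[A3.2]}. Once this reduction is established, the remainder of the proof is essentially bookkeeping: combining the Benaych-Georges--Nadakuditi limits with the standard Frobenius expansion, term by term over the $r$ spikes.
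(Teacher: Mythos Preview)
Your proposal is correct and follows essentially the same route as the paper: the reduction $Y_n=\mu_A X_n + B_n + \Delta_n$ with $\Delta_n=(A_n-\mu_A)\odot X_n$ having vanishing operator norm, followed by the Benaych-Georges--Nadakuditi asymptotics and the termwise Frobenius expansion, is exactly the paper's strategy (the reduction is packaged there as a separate ``fundamental lemma,'' with the operator-norm bound on $\Delta_n$ and its harmlessness established via resolvent functionals $H,Q$ rather than a direct matrix-Bernstein bound, but the content is the same). Your identification of the reduction step as the only nontrivial obstacle matches the paper's organization precisely.
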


\begin{defn}
  Let $\eta_0$ be the zero singular value shrinker, $\eta_0(y)\equiv 0$,
  so that $\hat{X}_{\eta_0}(Y) \equiv 0$. 
  Let $\eta$ be a singular value shrinker. The critical signal level for $\eta$
  is 
  \[
    x^{critical}(\eta)= \inf_{x} \left\{ L_1(\eta|X) < L_1(\eta_0|X) \right\}
\]
\end{defn}

As we can see, the asymptotic mean square error decomposes over the singular
values of the target matrix, $x_1,\ldots,x_r$. Each value $x_i$ that falls below 
$x^{critical}(\eta)$ is better estimated with the zero shrinker $\eta_0$ than
with $\eta$. It follows that any $x_i$ that falls below  $x^{critical}(\eta^*)$, where $\eta^*$ is the
optimal shrinker, cannot be reliably estimated by any shrinker $\eta$, and its
corresponding data singular value $y_i$ should simply be set to zero.  
This makes Theorem \ref{thm:informal_shrinker} formal.

\section{Estimating the model parameters}

In practice, using the optimal shrinker we propose requires 
an estimate of the model parameters.
In general, $\sigma_B$ is easy to estimate from the data
via a median-matching method \cite{Gavish2014}, namely
\begin{align*}
  \hat{\sigma}_B = \frac{y_{med}}{\sqrt{n \mu_\beta}}\,,
\end{align*}
where $y_{med}$ is the median singular value of Y, and $\mu_\beta$ is the median
of the Mar\u{c}enko-Pastur distribution.
However, estimation of $\mu_A$ and $\sigma_A$ 
must be considered on a case-by-case basis. 
For example, in the ``Additive noise and missing at random'' mode 
(table \ref{tab:composite}), 
$\sigma_A\equiv 1$ is known,
and $\mu_A$ is estimated by dividing the amount of missing values by the matrix size.

  \section{Simulation}\label{sec:verifying_results}
  Simulations were performed to verify the correctness of our main results\footnote{The full Matlab code that generated the figures in this paper and in the
  Supporting Information is permanently available at
\url{https://purl.stanford.edu/kp113fq0838}.}. 
  For more details, see Supporting Information.
  
  \begin{enumerate} 
    \item {\bf Critical signal level $x^{critical}(\lambda^*)$
      under increasing noise.} Figure \ref{fig:emp1}, left panel, shows the amount of
      data singular values $y_i$ above $x^{critical}(\lambda^*)$, as a function
      of the fraction of missing values $\kappa$. Theorem
      \ref{thm:informal_critical} correctly predicts the exact values of
      $\kappa$ at which the ``next'' data singular value falls below
      $x^{critical}(\lambda^*)$. 

    \item {\bf Phase plane for critical signal levels 
      $x^{critical}(\eta^*)$ and $x^{critical}(\lambda^*)$.}
      Figure \ref{fig:shrinker}, right panel, shows 
  the $x,\kappa$  plane, where $x$ is the signal level and $\kappa$ is the
  fraction of missing values. At each point in the plane, several independent
  data matrices were generated. Heatmap shows the fraction of the experiments at which the data singular value 
  $y_1$ was above  $x^{critical}(\eta^*)$ and $x^{critical}(\lambda^*)$.
  The overlaid graphs are theoretical predictions of the critical points.

    \item {\bf Brute-force verification of the optimal shrinker shape.}
      Figure \ref{fig:emp1}, right panel, shows the shape of the optimal shrinker (Theorem \ref{thm:informal_svht}).
      We performed a brute-force search for the value of $\eta(y)$
      that produces the minimal mean square error. A brute force search, performed
      with a
      relatively small matrix size, matches the asymptotic shape of the optimal
      shrinker.
  \end{enumerate}

  \begin{figure}[h]\centering
  \includegraphics[width=.45\linewidth,valign=t]{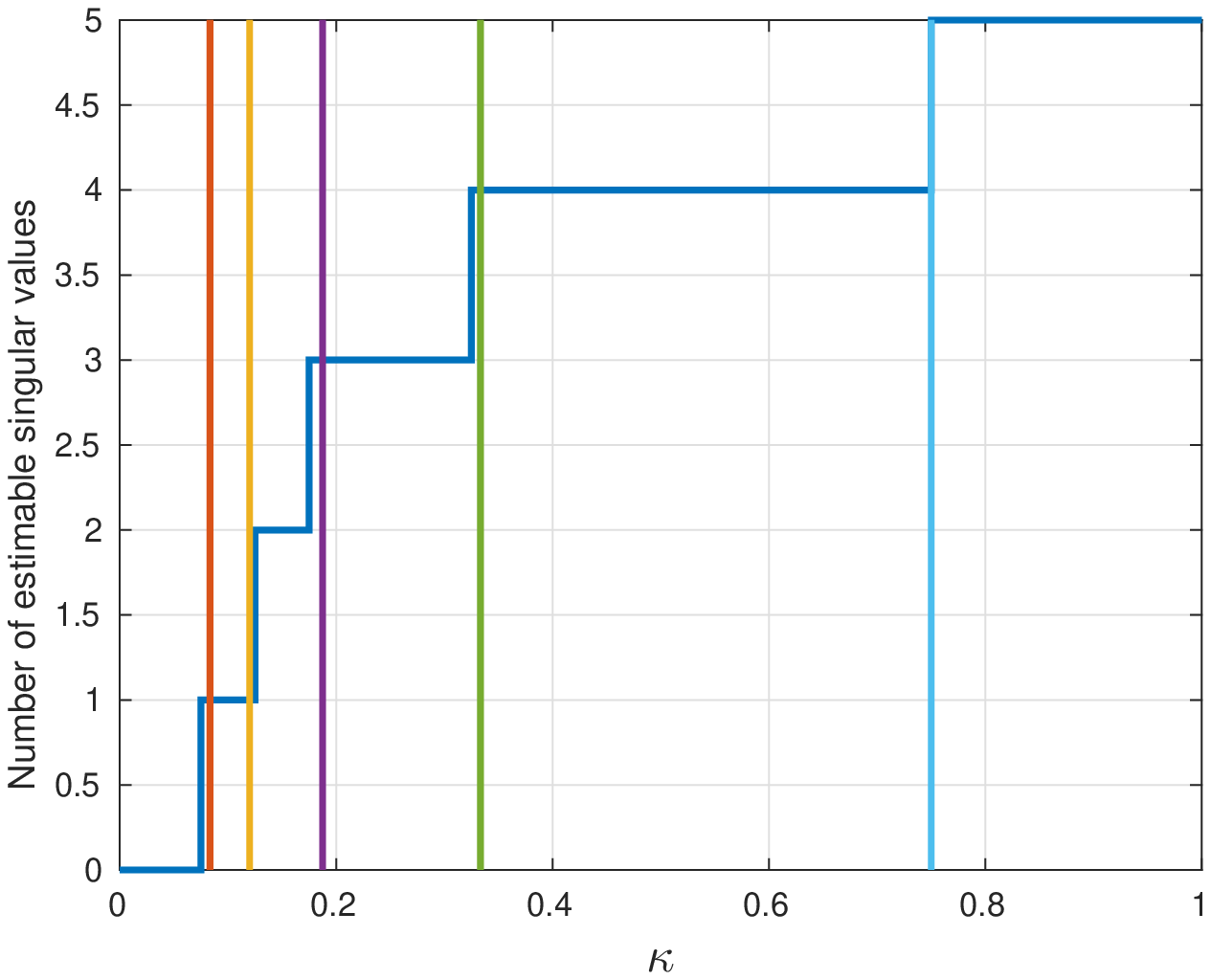}
 \includegraphics[width=.45\linewidth,valign=t]{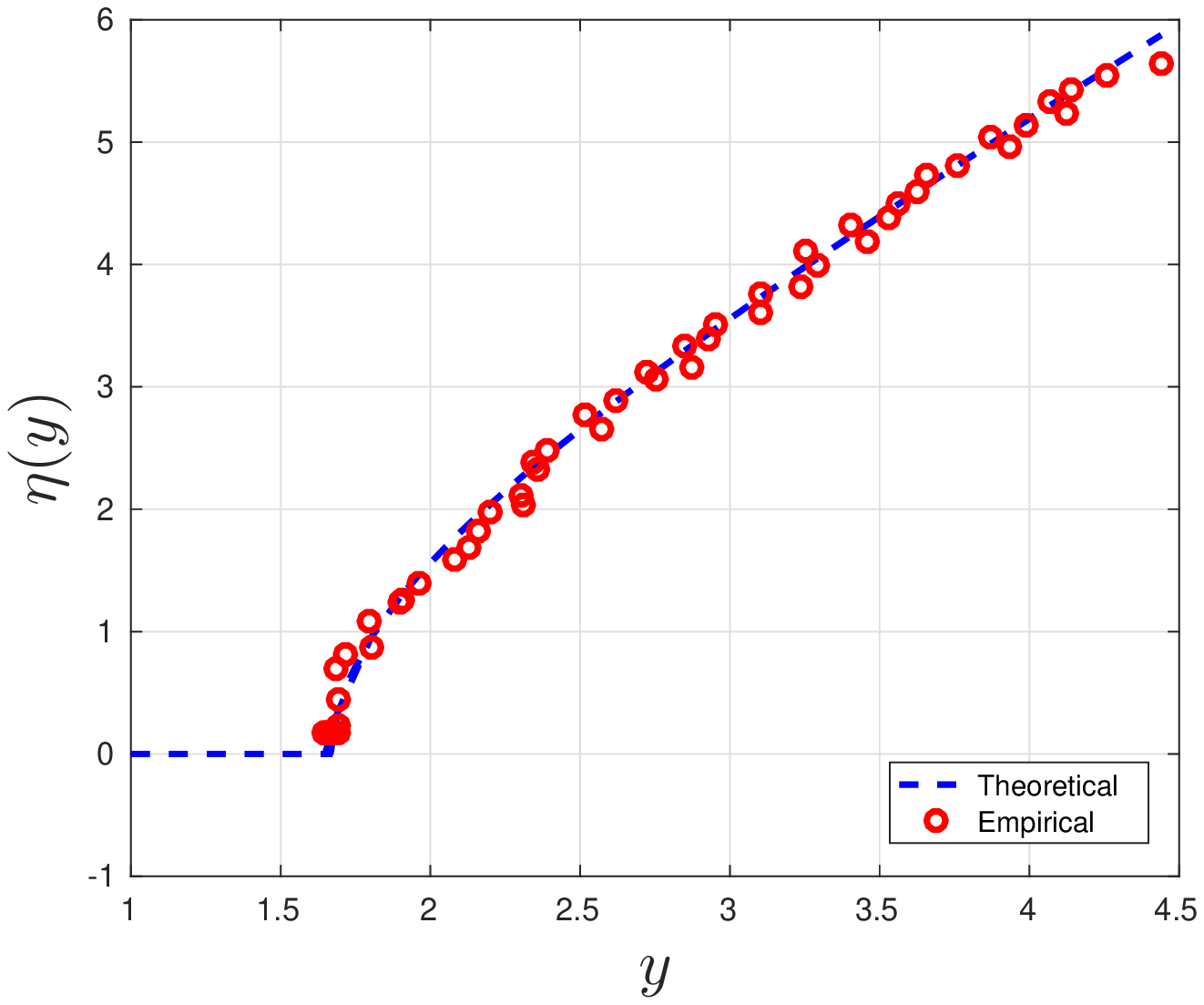} 
 \caption{Left: empirical validation of the predicted 
 critical signal level (Simulation 1).
 Right: Empirical validation of the optimal shrinker shape (Simulation 3).}
\label{fig:emp1}

\end{figure}

\ifx\SI\undefined\else

\section{Simulation details}

  \begin{enumerate}
    \item {\bf 
      Critical signal level $x^{critical}(\lambda^*)$ under increasing noise.}
      Consider the composite mode ``i.i.d additive noise and missing-at-random''(sec. \ref{sec:modes}, mode \ref{sec:modes})
    with noise level $\sigma=1$, singular values $\V{x}=(2,3,4,5,6)$ and matrix size $m=n=1000$. 
    To test whether our main results correctly predict the values of $\kappa$ at
    which the signal singular values stop being estimable, we scanned the $\kappa$
    axis. At each value of $\kappa$ we counted the number of estimable singular
    values and compared with the cutoff values of $\kappa$, at which this number
    should change.
    Figure \ref{fig:emp1} shows the number of measured estimable singular
    values against contamination level $\kappa$, with overlaid cut levels
    predicted by our main results (vertical lines).

    \item {\bf Phase plane for critical signal levels 
      $x^{critical}(\eta^*)$ and $x^{critical}(\lambda^*)$.}
  Figure \ref{fig:shrinker}, right panel, shows 
  the $x,\kappa$  plane, where $x$ is the signal level and $\kappa$ is the
  fraction of missing values. At each point in the plane, several independent
  experiments (data matrices) were generated. The heatmap shows the 
  the fraction of the experiments at which the data singular value 
  $y_1$ was above  $x^{critical}(\eta^*)$ and $x^{critical}(\lambda^*)$.
  Overlaid are the theoretical prediction of the critical points. Matrix size is $600 \times 600$, Monte carlo is 5 and $\beta = 1$.

    \item {\bf Brute-force verification of the optimal shrinker shape.}
      Figure (\ref{fig:emp1},right) shows the shape of the optimal shrinker (Theorem \ref{thm:informal_svht}).
      We performed a brute-force calculation scanning for the value of $\eta(y)$
      that produces the lower mean square error. Brute force search, peformed in
      a $250 \times 250$ matrix, matches the asymptotic shape of the optimal
      shrinker. The scan set the noise level $\sigma$ to 1, missing data level to $1-\mu_A = 0.3$, $\beta = m/n = 1$, and scanned signal values in range [0,6].

  \end{enumerate}

  \section{Proofs}
In this section we prove our main results: Theorem \ref{thm:informal_svht},
Theorem \ref{thm:informal_shrinker}, Theorem \ref{thm:informal_critical} and
Theorem \ref{improve:thm}, as well
as Lemma \ref{lemma:decomp}. 
The proofs rely on the following fundamental lemmas.
\begin{lemma} \label{fundamental:lem}
  Let $Y_n=f_{A_n,B_n}(X_n)$ 
  be a matrix sequence following our signal model (Definition 
  \ref{signal:def}) with $A_n\iid(\mu_A,\sigma^2_A/n)$ and 
  $B_n\iid(0,\sigma^2_B/n)$.
  Write $\bx_i = \mu_A\cdot x_i$.
  Then the following holds:
  \begin{enumerate}
    \item \label{eq:displacement} For each $1\leq i\leq r$ we have 
\[
	\lim_{n \to \infty} y_{n,i} \aseq 
	\begin{dcases*}
	   \sigma_B \sqrt{\left(
            \frac{\bx_i}{\sigma_B}+\frac{\sigma_B}{\bx_i}\right)
            \left(\frac{\bx_i}{\sigma_B}+\frac{\beta
          \sigma_B}{\bx_i}\right) } & $\bx_i>\sigma_B \beta ^\frac{1}{4}$\\
	\lim_{n \to \infty} y_{n,i} \aseq 
	\sigma_B( 1+\sqrt{\beta}) & ${\bx_i}\leq \sigma_B  \beta ^\frac{1}{4}$ \\ 
      \end{dcases*}
    \]

  \item  \label{eq:rotation} Let $1\leq i\leq r$ and $1\leq j \leq m_n$. 
    If $\bx_i>\sigma_B  \beta ^\frac{1}{4}$ and $j\leq r$, we have
\begin{align}
  d\cdot \lim_{n \to \infty} |\langle \Vut_{i},\Vu_{n,j}\rangle|^2   \aseq 
	\begin{dcases*}
	\frac{(\frac{\bx_i}{\sigma})^4-\beta}{(\frac{\bx_i}{\sigma})^4 +
        \beta (\frac{\bx_i}{\sigma})^2}& $\bx_i=\bx_j$\\
	0 & $\bx_i\neq \bx_j$\\
	\end{dcases*}
\end{align}
and
\begin{align*}
  d\cdot \lim_{n \to \infty} |\langle \Vvt_{i},\Vv_{n,j}\rangle|^2   \aseq 
	\begin{dcases*}
	\frac{(\frac{\bx_i}{\sigma})^4-\beta}{(\frac{\bx_i}{\sigma})^4
        +(\frac{\bx_i}{\sigma})^2}& $\bx_i=\bx_j$\\
	0 & $\bx_i\neq \bx_j$\\
	\end{dcases*}
\end{align*}
Otherwise, both quantities converge a.s to $0$ as $n \to \infty$.

\end{enumerate}

\end{lemma}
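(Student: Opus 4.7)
The plan is to reduce the contamination model $Y_n=A_n\odot X_n+B_n$ to the classical additive-noise spiked model, for which the displacement and singular-vector-overlap formulas stated in the lemma are the standard BBP-type results (Benaych-Georges--Nadakuditi). First, decompose
\[
Y_n \;=\; \mu_A X_n \;+\; N_n, \qquad N_n := (A_n-\mu_A)\odot X_n + B_n.
\]
The matrix $\mu_A X_n$ is a rank-$r$ spike with effective signal values $\bar x_i := \mu_A x_i$ and the same singular vectors as $X_n$. The core claim is that $N_n$ is asymptotically indistinguishable, in operator norm, from a pure i.i.d.\ noise matrix of per-entry variance $\sigma_B^2/n$; once this is established, the stated limits follow by applying the standard spiked-model theorem to $\mu_A X_n + B_n$ and transferring the conclusion to $Y_n$ by a perturbation argument.

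To control the multiplicative perturbation, observe that under assumption {\bf A3} (Haar or deterministically incoherent singular vectors), $\|X_n\|_\infty^2 = O(\log^{4D}(n)/(m_n n))$ almost surely. Therefore the independent, mean-zero entries of $(A_n-\mu_A)\odot X_n$ have variance $O(\sigma_A^2 \log^{4D}(n)/(m_n n^2))$ and bounded moments, and a standard operator-norm inequality for matrices with independent non-identically-distributed entries gives
\[
\|(A_n-\mu_A)\odot X_n\|_{op} \;=\; O\!\left(\log^{2D}(n)/\sqrt{m_n n}\right) \;\longrightarrow\; 0 \quad \text{a.s.}
\]
Consequently $\|N_n-B_n\|_{op}\to 0$, so $N_n$ inherits the Mar\v{c}enko--Pastur bulk and the top-edge location $\sigma_B(1+\sqrt{\beta})$ of an i.i.d.\ $(0,\sigma_B^2/n)$ matrix.

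With this reduction, part (1) of the lemma is exactly the displacement statement of the spiked-model theorem applied to $\mu_A X_n + B_n$ with signal strengths $\bar x_i$ and noise level $\sigma_B$: spikes with $\bar x_i > \sigma_B\beta^{1/4}$ pop out of the bulk at the claimed position, while subcritical spikes are absorbed into the top edge. Weyl's inequality, combined with the $o(1)$ operator-norm bound on $Y_n-(\mu_A X_n+B_n)$, transports each singular-value limit to $y_{n,i}$. Part (2) follows from the corresponding singular-vector overlap formulas; supercritical spikes are separated from the bulk edge by a strictly positive spectral gap, so the associated left and right singular subspaces are stable under the $o(1)$ operator-norm perturbation via a Davis--Kahan argument, and the overlaps with $\tilde u_i,\tilde v_i$ converge to the stated values. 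The vanishing of the overlap when $\bar x_i\neq \bar x_j$ is standard, following from the asymptotic orthogonality of distinct spike subspaces.

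\textbf{Main obstacle.} The crux is the reduction step: proving that $(A_n-\mu_A)\odot X_n$ has vanishing operator norm. This is precisely where the incoherence assumption {\bf A3} is indispensable---without a uniform bound on $\|X_n\|_\infty$, the multiplicative-noise term could concentrate in a few rows or columns and contribute at operator-norm order one, genuinely shifting the spike locations and altering the effective noise level (which would then depend on $\sigma_A$ and on the matrix $X_n$ itself, rather than only on $\sigma_B$). Once the reduction is in place, the remainder is bookkeeping around the well-known BBP phase transition, together with standard singular-value/singular-vector perturbation tools.
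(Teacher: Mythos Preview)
Your proposal is correct and follows the same reduction as the paper: write $Y_n=\mu_A X_n+B_n+\Delta_n$ with $\Delta_n=(A_n-\mu_A)\odot X_n$, use incoherence ({\bf A3}) to show $\|\Delta_n\|_{op}\to 0$ a.s., and then invoke the Benaych-Georges--Nadakuditi spiked-model results with signal $\mu_A X_n$ and noise level $\sigma_B$. The only difference is in how the $o(1)$ perturbation is absorbed. The paper treats $B_n+\Delta_n$ as the noise matrix, shows (via its Lemmas on the resolvent functionals $H_{i,j},Q_{i,j}$) that adding $\Delta_n$ does not change the limiting Stieltjes-type quantities, and then applies the BGN theorem \emph{directly} to $Y_n=\mu_A X_n+(B_n+\Delta_n)$. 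You instead apply BGN to the clean model $\mu_A X_n+B_n$ and transfer the conclusions to $Y_n$ afterward via Weyl and Davis--Kahan. Your route is a little more elementary, since it sidesteps the resolvent machinery; the paper's route has the advantage that once the $H,Q$ hypotheses are verified, both the singular-value and singular-vector statements fall out of a single black-box lemma without a separate spectral-gap argument. Either way, the substantive step---and the only place {\bf A3} enters---is exactly the one you identified.
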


\paragraph{Proof of Theorem \ref{thm:informal_svht}.}
By lemma \ref{lemma:AMSE} we have that for any singular value hard threshold estimator $\lambda$,
AMSE is given by \eqref{amse_:eq}.
As for the bottom case in \eqref{amse_:eq}, clearly $\overline{x}^2$ is strictly
increasing in $\overline{x}$, and therefor in $x$ as well.
As for the top case in \eqref{amse_:eq}, direct differentiation shows that it is 
strictly decreasing. It follows that the two functions of $\overline{x}$ intersect
at a unique point. Denote their intersection point by $ \overline{x^{est}} = \mu_A x^{est}$:
\begin{eqnarray*}
  \overline{x^{est}} =  \sigma_B \sqrt{ \frac{ 1+\beta+\sqrt{1+14\beta+\beta^2}}{2}}.  
\end{eqnarray*}

It follows that the following hard threshold for Y's singular values achieves the minimum AMSE between the above two expressions:

 \begin{align*}
\lambda^* = \sigma_B \big(\sqrt{( \frac{\overline{x^{est}}}{\sigma_B}+\frac{\sigma_B}{\overline{x^{est}}})(\frac{\overline{x^{est}}}{\sigma_B}+\frac{\beta \sigma_B}{\overline{x^{est}}}})\big) 
\end{align*}.
\qed

\paragraph{Proof of Theorem \ref{thm:informal_shrinker}.}
Recall that for any matrix $X$,  $\norm{X}_F^2=\langle X,X \rangle$ where  
$\langle \cdot ,\cdot\rangle$ is the Hilbert–Schmidt matrix inner product.
The notation $\langle X,X \rangle$ refers to  Hilbert–Schmidt matrix inner
product or to the standard Euclidean inner product.
Denoting our shrinker by $\eta=\eta(y)$,and let $\overline{\eta}$, we evaluate the mean square error:
\begin{align*}
  \norm{\frac{1}{\mu_A}\hat{X}_\eta(Y)-X}_F^2 &=  \langle \frac{1}{\mu_A}\hat{X}_\eta(Y)-X,\frac{1}{\mu_A}\hat{X}_\eta(Y)-X \rangle_{HS}  =\\
  &\langle\frac{1}{\mu_A}\hat{X}_\eta(Y),\frac{1}{\mu_A}\hat{X}_\eta(Y)\rangle + \langle X_n, X_n \rangle -2\langle \frac{1}{\mu_A}\hat{X}_\eta(Y_n),X_n \rangle = \\
  &\sum_{i=1}^{m_n} \overline{\eta}(y_{n,i})^2 +\sum_{i=1}^r x_i^2  -2 \sum_{i,j=1}^r x_i  \overline{\eta}(y_{n,i}) \langle \V{a}_i\V{b}_i',\V{u}_{n,j}\V{v}_{n,j}' \rangle  = \\
&  \sum_{i=r+1}^{m_n} \overline{\eta}(y_{n,i})^2 + 
  \sum_{i=1}^r  \big[ \overline{\eta}(y_{n,i})^2+x_i^2- 2x_i\sum_{j=1}^r \overline{\eta}(y_{n,j}) \langle \V{a_i}\V{b_i'}, \V{u_{n,j}}\V{v'_{n,j}}\rangle \big].
\end{align*}

Assuming that $\forall y_i < \beta_+ ,\overline{\eta}(y_i) = 0$, and $rank(X)=1$, we get when $n\to\infty$: 
\begin{align*}
  \norm{\frac{1}{\mu_A}\hat{X}_\eta(Y)-X}_F^2 \aseq &  
    \overline{\eta}(y_{n,1})^2+x_1^2- 2x_1 \overline{\eta}(y_{n,1}) \langle \V{a_1}\V{b_1'}, \V{u_{n,1}}\V{v'_{n,1}}\rangle =\\
  &\overline{\eta}(y_{n,1})^2+x_1^2- 2x_1 \overline{\eta}(y_{n,1}) \langle \V{a_1},\V{u_{n,1}}\rangle \langle\V{b_1'},\V{v'_{n,1}}\rangle.
  \end{align*}

Differentiating w.r.t $\eta$ and comparing to zero we get: $\overline{\eta} = x_1 \langle \V{a_1},\V{u_{n,1}}\rangle \langle\V{b_1'},\V{v'_{n,1}}\rangle
$. 
We would like to express $\overline{\eta}$ as a function of $y_1$ instead of $x_1$.
When $n\to\infty$, the expression $\langle \V{a_1},\V{u_{n,1}}\rangle \langle\V{b_1'},\V{v'_{n,1}}\rangle$ is known.  $x_1$ can be expressed as a function of $y$ as in \ref{eq:x_of_y}. Denote $t = \frac{\mu_A\cdot x}{\sigma_B}$, then we have:

\begin{align*}
  \overline{\eta} &= x_1 \langle \V{a_1},\V{u_{n,1}}\rangle \langle\V{b_1'},\V{v'_{n,1}}\rangle 
  =\sigma_B\cdot \frac{t^4 - \beta}{\sqrt{ (t^4 + \beta t^2) (t^4 + t^2)  }}  
  =\frac{\sigma_B^2}{y} \frac{t^4-\beta}{t^2}  
  =\frac{\sigma_B^2}{y} (\sqrt{t^2-\frac{\beta}{t^2}})^2  \\
  &=\frac{\sigma_B^2}{y} \sqrt{(t^2+\frac{\beta}{t^2})^2 - 4\beta}
  =\frac{\sigma_B^2}{y} \sqrt{((t+\frac{1}{t})(t+\frac{\beta}{t}) - \beta - 1))^2 - 4\beta} 
  =\frac{\sigma_B^2}{y} \sqrt{(\frac{y}{\sigma_B}^2 -\beta-1)^2 - 4\beta}.
\end{align*}

Finally:
\[ \label{eq:eta_star}
  \eta^* = \begin{dcases} \frac{\sigma_B^2}{\mu_A y}\sqrt{( (\frac{y}{\sigma_B})^2-\beta-1)^2-4\beta}  & y \geq \sigma_B(1+\sqrt{\beta}) \\
    0  &  y < \sigma_B(1+\sqrt{\beta}) \\
  \end{dcases}
\]

Now without the assumption that $rank(X)=1$, we have:
\begin{eqnarray*}
  \norm{\hat{X}_\eta(Y)-X}_F \aseq  \sum_{i=i}^r  \eta(y_{n,i})^2+x_i^2- 2x_i \eta(y_{n,i}) \langle \V{a_i},\V{u_{n,i}}\rangle \langle\V{b_i'},\V{v'_{n,i}}\rangle.
\end{eqnarray*}

Here there are $r$ independent positive summands. The minimum of each is achieved by $\eta^*$ above \ref{eq:eta_star}. \qed

\paragraph{Proof of Theorem \ref{thm:informal_critical}.}
We first calculate $x^{critical}(\lambda^*)$.

Let $\overline{x} = \mu_A x$, and $\lambda$ a threshold operator s.t. $\lambda \geq \sigma_B(1+\sqrt{\beta})$.
Denote by $x^{est}$ be the unique solution to the intersection equality of \eqref{lemma:AMSE}:
\begin{eqnarray*}
  \overline{x}^2 = \sigma_B ^2 (( \frac{\overline{x}}{\sigma_B}+\frac{\sigma_B}{\overline{x}})(\frac{\overline{x}}{\sigma_B} +\frac{\beta \sigma_B}{\overline{x}})- ((\frac{\overline{x}}{\sigma_B})^2 - \frac{2\beta  \sigma_B ^2}{\overline{x}^2} ))
\end{eqnarray*}

The left and right hand expressions denote the zero estimator loss $L_1(\lambda_0|X) $ and   $L(\lambda|X)$ for a given $\lambda$ accordingly. For $x < x^{est} = \frac{\sigma_B}{\mu_A}c$, $L_1(\lambda_0|X) < L(\lambda|X)$ for any $\lambda$, and the proof for  $x^{critical}(\lambda^*)$ follows.

We now turn to  $x^{critical}(\eta^*)$.
According to the fundamental displacement lemma (lemma \ref{fundamental:lem} ), $x_i$ is asymptotically undetectable-- displaced to a value independent of x --  if its matching limit data singular value $y_i \leq \sigma_B( 1+\sqrt{\beta})$.
The transition between detectable and undetectable is at:
\begin{align*}
 {\overline{x_i}}= \sigma_B  \beta ^\frac{1}{4} 
\end{align*}

 Assume a shrinker $\eta$  different from the zero shrinker, s.t. for $y_i \leq \sigma_B( 1+\sqrt{\beta})$, $\eta(y_i) \neq 0$.
 The AMSE expression for any shrinker is: 
\begin{align*}
  \norm{\frac{1}{\mu_A}\hat{X}_\eta(Y)-X}_F^2 &=  \langle \frac{1}{\mu_A}\hat{X}_\eta(Y)-X,\frac{1}{\mu_A}\hat{X}_\eta(Y)-X \rangle_{HS}  =\\
  &\langle\frac{1}{\mu_A}\hat{X}_\eta(Y),\frac{1}{\mu_A}\hat{X}_\eta(Y)\rangle + \langle X_n, X_n \rangle -2\langle \frac{1}{\mu_A}\hat{X}_\eta(Y_n),X_n \rangle = \\
  &\sum_{i=1}^{m_n} \overline{\eta}(y_{n,i})^2 +\sum_{i=1}^r x_i^2  -2 \sum_{i,j=1}^r x_i  \overline{\eta}(y_{n,i}) \langle \V{a}_i\V{b}_i',\V{u}_{n,j}\V{v}_{n,j}' \rangle   \\
\end{align*}

For $\eta$ as assumed above, the expression $\sum_{i=1}^{m_n} \overline{\eta}(y_{n,i})^2$ tends to  $\infty$ with $n$: there are infinitely many noise singular values that are shrunk to a positive value. The other summands are bounded, so $L(\eta|X) < L(\eta_0|X)$ and therefore, 
 \[
   x^{critical}(\eta^*) = \frac{\sigma_B\beta^{\frac{1}{4}}}{\mu_A}\,.
    \]

\qed

\paragraph{Proof of Theorem \ref{improve:thm}.} 

We start with  $M_{\lambda^*}$.
From lemma \ref{lemma:AMSE} we have that the maximal AMSE is at the point of intersection of the AMSE expressions. This point is given by:
\begin{eqnarray*}
  x^{est} =  \sigma_B \sqrt{ \frac{ 1+\beta+\sqrt{1+14\beta+\beta^2}}{2}}  
\end{eqnarray*}

Given $\beta=1$, plug in $x^{est}$ and the calculation of $M_{\lambda^*}$ follows.

Turning to $M_{TSVD}$, 
TSVD is actually hard thresholding at the bulk edge. According to the displacement formula \ref{eq:displacement}, asymptotically it is located at $\sigma_B(1+\sqrt(\beta))$, thus signal of size slightly larger than $\sigma_B \beta ^\frac{1}{4}/\mu_A$, will be estimated as zero. Plugging in $\beta=0$, $x^{worst} = \frac{\sigma_B}{\mu_A}$. Plugging $x^{worst}$ in the first expression of the AMSE (as we assumed $x^{worst} > x(\lambda) =  \sigma_B \beta ^\frac{1}{4}/\mu_A$), the proof follows.

Turning finally to $M_{\eta^*}$, 
similarly to when we derived the optimal shrinker, we begin by focus on a single value. Considering the optimal shrinker for a rank one matrix:
\[
  \eta = x_1 \langle \V{a_1},\V{u_{n,1}}\rangle \langle\V{b_1'},\V{v'_{n,1}}\rangle,
\]
and let $t = {(\mu_A\cdot x)}/{\sigma_B}$, 
 the AMSE is:
\begin{align*}
  \norm{\hat{X}_\eta(Y)-\mu_A X}_F^2  = 
      &\eta(y_{n,1})^2+\overline{x_1}^2- 2\overline{x_1} \eta(y_{n,1}) \langle \V{a_1},\V{u_{n,1}}\rangle \langle\V{b_1'},\V{v'_{n,1}}\rangle \aseq \\
      & \overline{x}^2-\eta^2 =\\
      & \mu^2 x^2 \left( 1 - \frac {(t^4-\beta)^2}{(t^4+\beta t^2)(t^4+t^2)}  \right)
  \end{align*}

Choosing $t^2 = z$, we get
\begin{align*}
      \sigma^2 z \left( 1 - \frac {(z^2-\beta)^2}{(z^2+\beta z)(z^2+z)}  \right)
  \end{align*}

According to the assumption on $x$, $t>\beta^\frac{1}{4}$ and therefore $z>\sqrt{\beta}$.
So the entire expression is monotonically increasing with $\beta$, in $\beta\in (0,1]$.
Plugging in $\beta=1$, the simplified expression is 
\[
  \sigma^2 \left(2-\frac{1}{z}\right) \underset{z\to\infty}{\to} 2\sigma^2
\]

This is the expression for the squared asymptotic loss for every singular value, yielding MSE of $r2\sigma^2$ in total.

Using the fact that 
$\norm{\hat{X}_\lambda(Y_n)-\mu_A\cdot X_n}_F^2 = \frac{1}{\mu_A^2} \norm{\hat{X}_\lambda(Y_n)-\mu_A\cdot X_n}_F^2 $ we divide the expressions by $\mu_A^2$ and the proof follows.

\qed

\paragraph{Proof of Lemma \ref{lemma:decomp}.}
We begin with the shrinker loss.
\begin{align*}
L_\infty(\eta|\V{x})  &=   \norm{\frac{1}{\mu_A}\hat{X}_\eta(Y)-X}_F^2 =  \langle \frac{1}{\mu_A}\hat{X}_\eta(Y)-X,\frac{1}{\mu_A}\hat{X}_\eta(Y)-X \rangle_{HS}  \\&=
  \langle\frac{1}{\mu_A}\hat{X}_\eta(Y),\frac{1}{\mu_A}\hat{X}_\eta(Y)\rangle + \langle X_n, X_n \rangle -2\langle \frac{1}{\mu_A}\hat{X}_\eta(Y_n),X_n \rangle  \\&=
  \sum_{i=1}^{m_n} \overline{\eta}(y_{n,i})^2 +\sum_{i=1}^r x_i^2  -2 \sum_{i,j=1}^r x_i  \overline{\eta}(y_{n,i}) \langle \V{a}_i\V{b}_i',\V{u}_{n,j}\V{v}_{n,j}' \rangle   \\ &= 
  \sum_{i=r+1}^{m_n} \overline{\eta}(y_{n,i})^2 + 
  \sum_{i=1}^r  \big[ \overline{\eta}(y_{n,i})^2+x_i^2- 2x_i\sum_{j=1}^r \overline{\eta}(y_{n,j}) \langle \V{a_i}\V{b_i'}, \V{u_{n,j}}\V{v'_{n,j}}\rangle \big].
\end{align*}

Assuming that $\forall y_i < \beta_+ ,\overline{\eta}(y_i) = 0$,when $n\to\infty$: 

\begin{eqnarray*}
   \aseq \sum_{i=i}^r  \eta(y_{n,i})^2+x_i^2- 2x_i \eta(y_{n,i}) \langle \V{a_i},\V{u_{n,i}}\rangle \langle\V{b_i'},\V{v'_{n,i}}\rangle
\end{eqnarray*}

The AMSE decomposition for a hard threshold $\lambda$ is a result  of lemma \ref{lemma:AMSE}.

\qed

\begin{lemma} \label{lemma:AMSE}
  {\bf AMSE of hard threshold $\lambda$. }Fix the signal matrix's rank $r$ and let $\underline{x}\in \R^r$ the fixed rank singular values vector of the signal $X$. Let $\{X_n(\underline{x})\}_{n=1}^{\infty}$ , $\{A_n\}_{n=1}^{\infty}$ and $\{B_n\}_{n=1}^{\infty}$, Let $\beta$ be the sequence asymptotic ratio $X_n\in \R^{m_n\times n}(\R)$ where $\lim_{n\to\infty} m_n/n=\beta$. Let $Y_n$ matrix sequence s.t. $Y = A \bigodot X + B$ as in the basic framework \eqref{signal:def} with matching $\beta,\sigma_B,\mu_A$. 
 Let $Y$'s SVD decomposition be: 
  \begin{eqnarray} 
    Y=\sum_{i=1}^m y_i \V{u}_i \V{v}'_i
\end{eqnarray}
  
  Assume we estimate X by the hard threshold estimator with parameter k, i.e. by setting to zero $Y$'s singular values that are smaller than $k$:
  \begin{eqnarray*}
    \hat{X}_\lambda(Y) = \sum_{i=1}^{m} \eta_H(y_i;\lambda)  \V{u}_i \V{v}'_i, \text{ , where } \eta_H(y;\lambda) =  	\begin{dcases*}
		0 & $y<\lambda$,\\
		y & otherwise\\
	    \end{dcases*}
  \end{eqnarray*}
  and let $\lambda$ be a selected hard threshold s.t. $\lambda\geq \sigma_B(1+\sqrt{\beta})$. 

  Define the Asymptotic MSE:
  \begin{eqnarray*}
    AMSE(\hat{X}_\lambda(Y)) = \lim _{n\to\infty} \norm{\frac{1}{\mu_A}\hat{X}_\lambda(Y_n)- X_n}_F^2
  \end{eqnarray*} and . Then 
\begin{align} 
  AMSE(\hat{X}_\lambda , \V{x})  = \sum_{i=1}^{r} M(\hat{X}_\lambda,x_i)
\end{align}

and

\begin{align} \label{amse_:eq}
    M(\hat{X}_\lambda,x)   =
	\begin{dcases*}
	  \left(\frac{\sigma_B}{\mu_A}\right) ^2 (( \frac{\overline{x}}{\sigma_B}+\frac{\sigma_B}{\overline{x}})(\frac{\overline{x}}{\sigma_B} +\frac{\beta \sigma_B}{\overline{x}})- ((\frac{\overline{x}}{\sigma_B})^2 - \frac{2\beta  \sigma_B ^2}{\overline{x}^2} ))
&$ \overline{x}\geq x(\lambda) $\\
	x^2 & $\overline{x}< x(\lambda)$\\
	\end{dcases*}
\end{align}

where $\overline{x} = \mu_A x$.

\end{lemma}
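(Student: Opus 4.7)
\textbf{Proof plan for Lemma \ref{lemma:AMSE}.} The plan is to expand the squared Frobenius norm into three inner-product pieces, substitute the SVDs of $X_n$ and $Y_n$, and then pass to the limit term-by-term using Lemma \ref{fundamental:lem}. First I would write
\[
\|\tfrac{1}{\mu_A}\hat{X}_\lambda(Y_n) - X_n\|_F^2 = \tfrac{1}{\mu_A^2}\|\hat{X}_\lambda(Y_n)\|_F^2 + \|X_n\|_F^2 - \tfrac{2}{\mu_A}\langle \hat{X}_\lambda(Y_n), X_n\rangle_{HS}.
\]
Orthonormality of the singular vectors of $Y_n$ gives $\|\hat{X}_\lambda(Y_n)\|_F^2 = \sum_{i=1}^{m_n} y_{n,i}^2\,\mathbf{1}_{y_{n,i}\geq \lambda}$, while $\|X_n\|_F^2 = \sum_{i=1}^r x_i^2$. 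Substituting both SVDs into the cross term yields
\[
\langle \hat{X}_\lambda(Y_n), X_n\rangle_{HS} = \sum_{i=1}^r\sum_{j=1}^{m_n} x_i\, y_{n,j}\,\mathbf{1}_{y_{n,j}\geq \lambda}\,\langle \tilde{u}_{n,i}, u_{n,j}\rangle\,\langle \tilde{v}_{n,i}, v_{n,j}\rangle.
\]

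Next I would invoke Lemma \ref{fundamental:lem} to take $n\to\infty$. Part 1 implies that for every bulk index $j>r$ the singular value $y_{n,j}$ is asymptotically bounded by $\sigma_B(1+\sqrt{\beta})\leq \lambda$, killing every bulk contribution to $\|\hat{X}_\lambda(Y_n)\|_F^2$; and part 2 forces $\langle \tilde{u}_{n,i}, u_{n,j}\rangle\langle \tilde{v}_{n,i}, v_{n,j}\rangle \to 0$ whenever $j>r$ or when $i,j$ correspond to distinct signal spikes, eliminating every off-diagonal cross term. What remains is a sum over $i=1,\ldots,r$ of three single-spike contributions built from $y_{n,i}$, $\langle \tilde{u}_{n,i}, u_{n,i}\rangle$ and $\langle \tilde{v}_{n,i}, v_{n,i}\rangle$, proving the decomposition $AMSE(\hat{X}_\lambda,\V{x}) = \sum_{i=1}^r M(\hat{X}_\lambda, x_i)$.

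For each spike in isolation I would distinguish two regimes via the inverse map \eqref{eq:x_of_y}: because $y_\infty(\bx_i)$ is strictly monotone in $\bx_i$ on the supercritical range $\bx_i>\sigma_B\beta^{1/4}$, the event $\{y_\infty(\bx_i)\geq \lambda\}$ is equivalent to $\{\bx_i\geq x(\lambda)\}$. In the subcritical case the estimator's spike contribution vanishes entirely and only $x_i^2$ survives, which is the bottom case of \eqref{amse_:eq}. In the supercritical case, substituting the explicit limits of parts 1 and 2 into $\tfrac{1}{\mu_A^2}y_\infty(\bx_i)^2 + x_i^2 - \tfrac{2}{\mu_A}x_i\, y_\infty(\bx_i)\,\sqrt{c_u(\bx_i)c_v(\bx_i)}$ and simplifying algebraically (rewriting $y_\infty^2$ via the displacement formula, factoring the $(\bx_i/\sigma_B)^2$ terms) yields the displayed closed form.

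The main obstacle will be the signed product $\langle \tilde{u}_{n,i}, u_{n,i}\rangle \langle \tilde{v}_{n,i}, v_{n,i}\rangle$, since Lemma \ref{fundamental:lem} only controls the squared inner products. The standard remedy is to note that this product is invariant under the joint SVD sign ambiguity $(\tilde{u}_{n,i},\tilde{v}_{n,i})\mapsto(-\tilde{u}_{n,i},-\tilde{v}_{n,i})$, and then to argue, by a first-order perturbation argument along the lines of \cite{Shabalin2013,Nadakuditi2014}, that it converges to the \emph{positive} square root of the product of the squared limits. Apart from this sign issue and the routine algebraic simplification in the supercritical case, every other step is a direct application of the fundamental lemma.
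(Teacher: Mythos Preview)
Your proposal is correct and follows essentially the same route as the paper's own proof: expand the squared Frobenius distance via the Hilbert--Schmidt inner product, substitute the two SVDs, use Lemma~\ref{fundamental:lem} to kill the bulk terms ($j>r$) and the off-diagonal signal terms ($i\neq j$), and then plug in the explicit displacement and rotation limits spike-by-spike. The only cosmetic difference is that the paper first computes $\|\hat{X}_\lambda(Y_n)-\mu_A X_n\|_F^2$ and divides by $\mu_A^2$ at the very end, whereas you carry the $1/\mu_A$ factor from the outset; and you are in fact more explicit than the paper about the sign of the product $\langle \tilde{u}_{n,i},u_{n,i}\rangle\langle \tilde{v}_{n,i},v_{n,i}\rangle$, which the paper silently takes as the positive square root.
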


\paragraph{Proof of lemma \ref{lemma:AMSE}} \label{proof:svht}
We'll denote the SVD of a signal matrix X, which is an element of $\{X_n\}$, by:
\[
X=\sum_{i=1}^r x_i \V{a}_i \V{b}'_i
\]

\begin{align*}
  \norm{\hat{X}_\lambda(Y)-\mu_AX}_F^2 &=  \langle \hat{X}_\lambda(Y)-\mu_AX,\hat{X}_\lambda(Y)-\mu_AX \rangle_{HS}  =\\
  &\langle\hat{X}_\lambda(Y),\hat{X}_\lambda(Y)\rangle + \langle \mu_AX_n, \mu_AX_n \rangle -2\langle \hat{X}_\lambda(Y_n),\mu_AX_n \rangle = \\
  &\sum_{i=1}^{m_n} \eta_H(y_{n,i};\lambda)^2 +\sum_{i=1}^r \overline{x_i}^2  -2 \sum_{i,j=1}^r \overline{x_i}  \eta_H(y_{n,j};\lambda) \langle \V{a}_i\V{b}_i',\V{u}_{n,j}\V{v}_{n,j}' \rangle  = \\
&  \sum_{i=r+1}^{m_n} \mu_H(y_{n,i};\lambda)^2 + 
  \sum_{i=1}^r  \big[( \mu_H(y_{n,i};\lambda)^2+\overline{x_i}^2)- 2\overline{x_i}\sum_{j=1}^r \mu_H(y_{n,j};\lambda) \langle \V{a_i}\V{b_i'}, \V{u_{n,j}}\V{v'_{n,j}}\rangle \big].
\end{align*}

According to the displacement formula \ref{eq:displacement} 
\begin{enumerate}
  \item $y_{n,r+1} \aseq \sigma_B(1+\sqrt{\beta}) < \lambda$, the leftmost term above converges almost surely to zero.
  \item When $ 0 \leq \overline{x_i} \leq \sigma_B \beta^{\frac{1}{4}}$, all that remains is $\sum_{i=1}^r {\overline{x_i}}^2 $ which proves the second case of lemma \ref{lemma:AMSE}.
\end{enumerate}

Assuming now $  \overline{x_i} \geq \sigma_B \beta^{\frac{1}{4}}$, we calculate the limit of every expression in:
\[
  \sum_{i=1}^r  \big[( \mu_H(y_{n,i};\lambda)^2+\overline{x_i}^2)- 2\overline{x_i}\sum_{j=1}^r \mu_H(y_{n,j};\lambda) \langle \V{a_i}\V{b_i'}, \V{u_{n,j}}\V{v'_{n,j}}\rangle \big].
\]
\begin{enumerate}
  \item \begin{eqnarray*}
\lim_{n\to\infty} \eta_H(y_{n,i};\lambda)^2 \aseq
\begin{dcases*}
  \sigma_B^2 ( \frac{\overline{x_i}}{\sigma_B}+\frac{\sigma_B}{\overline{x_i}})(\frac{\overline{x_i}}{\sigma_B}+\frac{\beta \sigma_B}{\overline{x_i}}) &  for $	y_{n_i} >\lambda$  ; \\
	0 & otherwise\\
\end{dcases*}
\end{eqnarray*}
\item 
  
  According to the rotation formula:

\begin{eqnarray*}\lim_{n\to\infty} \langle \V{a_i}\V{b_i'}, \V{u_{n,j}}\V{v'_{n,j}}\rangle=
  \lim_{n\to\infty} \langle \V{a_j},\V{u_{n,j}} \rangle \langle \V{b_i},\V{v_{n,j}} \rangle \aseq
\begin{dcases*}
  \frac{(\frac{\overline{x_i}}{\sigma_B})^4-\beta}{d_i\sqrt{((\frac{\overline{x_i}}{\sigma_B})^4 +\beta(\frac{\overline{x_i}}{\sigma_B})^2)((\frac{\overline{x_i}}{\sigma_B})^4 +(\frac{\overline{x_i}}{\sigma_B})^2)}}
  &  for $\overline{x_i}=\overline{x_j}$; \\
	0 & otherwise\\
\end{dcases*}
\end{eqnarray*}

where $d_i = | \{ x_j | x_j = \overline{x_i} \} | $.

\item \begin{eqnarray*}  \lim_{n\to\infty} \mu_H(y_{n,j};\lambda) \langle \V{a_i}\V{b_i'}, \V{u_{n,j}}\V{v'_{n,j}} \rangle
 \aseq
\begin{dcases*}
  \frac{\sigma_B((\frac{\overline{x_i}}{\sigma_B})^4 - \beta) }{(\frac{\overline{x_i}}{\sigma_B})^3}  & for $y_{n_i} >\lambda $; \\
	0 & otherwise\\
\end{dcases*}
\end{eqnarray*}

\item
For the last and rightmost element of the AMSE,assuming $\forall i, d_i = 1$:  
\begin{eqnarray*}
  -2\overline{x_i} \mu_H(y_{n,j};\lambda)\langle \V{a_i} ,\V{v_{n,j}} \rangle \langle \V{b_i},\V{u_{n,j}} \rangle \aseq 
     -\sigma_B^2 \left ( \frac {2x_i^2}{\sigma_B^2} - 2\beta\frac{\sigma^2}{x_i^2} \right)
\end{eqnarray*}

\end{enumerate}

Adding all expressions yields for $\overline{x_i} \geq \sigma_B \beta^{\frac{1}{4}}$ and $\overline{x_i}>x(\lambda)$: 
\begin{eqnarray*}
  	M(\hat{X}_\lambda,x) = \sigma_B ^2 (( \frac{\overline{x_i}}{\sigma_B}+\frac{\sigma_B}{\overline{x_i}})(\frac{\overline{x_i}}{\sigma_B} +\frac{\beta \sigma_B}{\overline{x_i}})- ((\frac{\overline{x_i}}{\sigma_B})^2 - \frac{2\beta  \sigma_B ^2}{\overline{x_i}^2} ))
\end{eqnarray*}

.
Which concludes the calculation by showing the first expression in the lemma. 
Using the fact that 
$\norm{\hat{X}_\lambda(Y_n)-\mu_A\cdot X_n}_F^2 = \frac{1}{\mu_A^2} \norm{\hat{X}_\lambda(Y_n)-\mu_A\cdot X_n}_F^2 $ we divide the expressions by $\mu_A^2$ and the proof follows.

 \qed

 It remains to prove Lemma \ref{fundamental:lem}. The proof follows a
strategy proposed by \cite{Nadakuditi2014}. It is convenient to break
the proof into a sequence of lemmas, which are of independent interest. 

\begin{defn}
  Let $Z$ be an $m$-by-$n$ matrix.  For $\Vu_1,\Vu_2\in\R^m$ and 
  $\Vv_1,\Vv_2\in\R^n$, define
 \begin{eqnarray}
	H ( w| \Vu_1, \Vu_2, Z) &:=& \Vu_1'(wI_n - ZZ')^{-1}\Vu_2	\\
    Q ( w| \Vv_1, \Vv_2, Z) &:=& \Vv_1'(wI_n - Z'Z)^{-1}\Vv_2	\,.
  \end{eqnarray}
\end{defn}

\begin{defn} \label{Hinf:def}
  Let $\Zc=\{Z_n\}_{n=0}^\infty$ be 
  a sequence of matrices s.t. $Z_n$ is 
  $m_n$-by-$n$ 
 and  
  $\lim_{n\to \infty} m_n/n = \beta  $.
  Let 
  $\Uc=\{U_n\}_{n=0}^\infty$ be
  any sequence of $m_n$-by-$m_n$ orthonormal matrices with columns 
  $U_n = (\Vu_{n,1},\ldots,\Vu_{n,m_n}) $. 
Similarly let
$\Vc=\{V_n\}_{n=0}^\infty$ be
  any sequence of $n$-by-$n$ orthonormal matrices with columns 
  $V_n = (\Vv_{n,1},\ldots,\Vv_{n,n}) $. 
  Define
 \begin{eqnarray}
   H_{i,j} ( w\,|\, \Uc, \Zc) &:=& \lim_{n\to\infty} H(w|\Vu_{n,i},\Vu_{n,j},Z_n)
   \label{H:eq}
   \\
   Q_{i,j} ( w \,|\, \Vc, \Zc) &:=& \lim_{n\to\infty} H(w|\Vv_{n,i},\Vv_{n,j},Z_n)
   \label{Q:eq}
  \end{eqnarray}
  assuming these limits exist almost surely.
\end{defn}

The following result is due to Bloemendal et al \cite{Bloemendal2014}:
\begin{lemma} \label{MP:lem}
  Let $\Zc=\{Z_n\}_{n=0}^\infty$ a sequence of matrices s.t. $Z_n$ 
  is $m_n$-by-$n$ and  
  $\lim_{n\to \infty} m_n/n = \beta\leq 1$. Further assume that
  $(Z_n)_{i,j} \iid \Fc$, where $\Fc$ is some 
  distribution with bounded moments, mean $0$ and variance $\sigma^2/n$. 
 Let $\Uc$ and $\Vc$ be arbitrary sequences of orthonormal matrices 
 as in Definition \ref{Hinf:def}, which are either nonrandom or 
 independent of $\{Z_n\}$. Then for all $1\leq i \leq m_n$ and 
 $1\leq j \leq n$, 
 \begin{eqnarray*}
  H_{i,j} ( w \,|\, \Uc, \Zc) 
  &\aseq& \int { \frac { d \mu_\Zc(t)}{w^2-t^2} \delta_{i,j} }  \\
  Q_{i,j} ( w \,|\, \Vc, \Zc) 
  &\aseq& \int { \frac { d\mu_\Zc(t)}{w^2-t^2} \delta_{i,j} }  
 \end{eqnarray*}
 where $ \mu_\Zc(t)$ is the density of the Mar\u{c}enko-Pastur distribution \cite{Marcenko1967} given
 by
 \begin{eqnarray} \label{MP:eq}
   \mu_\Zc(t) = \frac
   {4\sigma^4\beta - (t^2-\sigma^2-\sigma^2\beta)^2}
   {2\pi \sigma^2 \beta t}
   \mathbf{1}_{\beta_-,\beta_+}(t)\,,
 \end{eqnarray}
 where
 $ \beta_\pm = \sigma^2(1\pm\sqrt{\beta})^2 $ \,.
\end{lemma}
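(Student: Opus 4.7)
The statement is a specialization of the isotropic local Marchenko--Pastur law applied to the resolvent quadratic forms that will be needed downstream in Lemma~\ref{fundamental:lem}. My plan has three steps, and the bulk of the mathematical content is imported from the cited work of Bloemendal, Erd\H{o}s, Knowles, Yau and Yin.

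First, I would pass to spectral coordinates. Writing the SVD $Z_n=\sum_k s_{n,k}\boldsymbol{\phi}_{n,k}\boldsymbol{\psi}_{n,k}'$, the resolvents $(w^2I-Z_nZ_n')^{-1}$ and $(w^2I-Z_n'Z_n)^{-1}$ diagonalize in the left and right singular bases respectively, giving
\begin{align*}
H(w\,|\,\mathbf{u}_{n,i},\mathbf{u}_{n,j},Z_n)&=\sum_k \frac{\langle\mathbf{u}_{n,i},\boldsymbol{\phi}_{n,k}\rangle\langle\boldsymbol{\phi}_{n,k},\mathbf{u}_{n,j}\rangle}{w^2-s_{n,k}^2}, \\
Q(w\,|\,\mathbf{v}_{n,i},\mathbf{v}_{n,j},Z_n)&=\sum_k \frac{\langle\mathbf{v}_{n,i},\boldsymbol{\psi}_{n,k}\rangle\langle\boldsymbol{\psi}_{n,k},\mathbf{v}_{n,j}\rangle}{w^2-s_{n,k}^2}.
\end{align*}
This reduces the claim to understanding the asymptotics of a sum whose numerators are overlaps of fixed test vectors with random singular vectors, and whose denominators are regular on the complement of the Marchenko--Pastur support.

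Next, I would invoke the isotropic local Marchenko--Pastur law. For any $w$ with $w^2$ outside $[\sigma^2(1-\sqrt\beta)^2,\sigma^2(1+\sqrt\beta)^2]$ and for any sequences of deterministic unit vectors, this law asserts
\begin{align*}
\mathbf{u}_{n,i}'(w^2 I-Z_nZ_n')^{-1}\mathbf{u}_{n,j}\;-\;m(w^2)\,\langle\mathbf{u}_{n,i},\mathbf{u}_{n,j}\rangle\;\stackrel{a.s.}{\longrightarrow}\;0,
\end{align*}
where $m(\zeta)$ is the Stieltjes transform of the limiting eigenvalue distribution of $Z_nZ_n'$, and an analogous statement holds for $Z_n'Z_n$ in the right basis. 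When $\mathcal{U}$ and $\mathcal{V}$ are random but independent of $\{Z_n\}$, I would condition on them and apply the deterministic version. Since the $\mathbf{u}_{n,i}$ and $\mathbf{v}_{n,j}$ are columns of orthonormal matrices, $\langle\mathbf{u}_{n,i},\mathbf{u}_{n,j}\rangle=\delta_{i,j}$, which immediately produces the Kronecker delta in the limit.

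Finally, I would identify the constant. The eigenvalue distribution of $Z_nZ_n'$ is obtained from the Marchenko--Pastur density \eqref{MP:eq} on singular values via the change of variables $\lambda=t^2$, and its Stieltjes transform at $w^2$ equals $m(w^2)=\int(w^2-t^2)^{-1}\,d\mu_{\mathcal{Z}}(t)$. Combining this identification with the off-diagonal vanishing yields both displays of the lemma. The main obstacle is the isotropic law itself: classical Marchenko--Pastur theory gives weak convergence of the empirical spectral distribution and therefore controls only tracial quadratic forms $m_n^{-1}\,\mathrm{tr}(w^2I-Z_nZ_n')^{-1}$, whereas the lemma asserts pointwise control against arbitrary fixed test vectors. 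Upgrading to this stronger statement requires the resolvent-expansion and fluctuation-averaging machinery developed by Bloemendal et al.; since we cite that theorem as a black box, the remaining work is spectral bookkeeping.
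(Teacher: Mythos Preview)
Your proposal is correct and aligned with the paper's approach: the paper does not prove this lemma at all but simply attributes it to Bloemendal et al.\ \cite{Bloemendal2014}, so your sketch---reducing to the isotropic local Mar\u{c}enko--Pastur law from that reference and doing the spectral bookkeeping to identify the limit with $\int(w^2-t^2)^{-1}\,d\mu_{\mathcal Z}(t)\,\delta_{i,j}$---is exactly the intended route, just spelled out in more detail than the paper provides.
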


The next lemma shows that the matrix with entries $A_{i,j} X_{i,j}$,
with $A_{i,j}\iid (\mu_A,\sigma^2_A)$, and $X_n$ from our signal model,
is well approximated by the matrix with entries
$\E[A] X_{i,j}$.
\begin{lemma}  \label{approx:lem}
  Let $A$ be a random variable with mean $\mu_A$ and variance $\sigma_A^2$. 
  Let $\left\{ X_n \right\}$ be a matrix sequence satisfying assumptions 
  {\bf A1}--{\bf A3} (Section \ref{signal:def} in the main text).
Let $\delta_{n,1}$ be the largest singular value of the matrix $\Delta_n$ with
entries
\[
  \left( \Delta_n \right)_{i,j} = A_{i,j} X_{i,j} - \mu_A X_{i,j}\,.
\]
Then $\delta_{n,1}\aslim 0 $ as $n\to\infty$.
\end{lemma}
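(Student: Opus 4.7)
The plan is to exploit the fact that the entries of $X_n$ are uniformly very small --- an order of magnitude smaller than what any spectral effect could feel --- and then invoke a standard concentration bound for the operator norm of a matrix with independent, small-variance entries.

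I would begin by writing $(\Delta_n)_{i,j} = (A_{i,j} - \mu_A) X_{i,j}$, so the entries of $\Delta_n$ are independent with mean zero and variance $(\sigma_A^2/n)\, X_{i,j}^2$ under the signal model's scaling of $A_n$. Under assumption A3.2, the triangle inequality combined with incoherence yields
\[
  |X_{i,j}| \leq \sum_{k=1}^r x_k \|\Vut_{n,k}\|_\infty \|\Vvt_{n,k}\|_\infty \leq C' \frac{\log^{2D}(n)}{\sqrt{m_n n}}
\]
for a constant $C'$ depending on $r,\V{x},C,D$. Under A3.1, classical concentration of Haar measure on the orthogonal group delivers the same bound almost surely for all $n$ sufficiently large (via Borel--Cantelli), with $D$ replaced by any exponent strictly larger than $1/2$. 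Consequently every entry of $\Delta_n$ has variance at most $\sigma_{\max}^2 = O(\log^{4D}(n)/n^3)$.

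A standard concentration bound for the operator norm of a random matrix with independent, mean-zero entries (for instance, Latala's inequality, or matrix Bernstein applied after truncation) then gives
\[
  \|\Delta_n\|_{\mathrm{op}} = O\!\left(\sigma_{\max}\sqrt{m_n+n}\right) = O\!\left(\frac{\log^{2D}(n)}{n}\right)
\]
with probability at least $1 - n^{-\alpha}$ for any fixed $\alpha > 0$, using the assumption that all moments of $A$ are bounded. Since these failure probabilities are summable in $n$, Borel--Cantelli upgrades this to $\delta_{n,1} = \|\Delta_n\|_{\mathrm{op}} \aslim 0$ almost surely.

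The main technical obstacle is that the cleanest off-the-shelf operator-norm concentration results assume sub-Gaussian or bounded entries, whereas here we only have a bounded-moment assumption on $A$. The standard remedy is a two-level truncation: replace $A_{i,j} - \mu_A$ by its truncation at a slowly growing threshold $t_n$, apply matrix Bernstein to the bounded part, and dispatch the tail using the moment assumption together with a union bound over the $m_n n$ entries. Because $|X_{i,j}|$ is already as small as $1/\sqrt{m_n n}$ up to logarithms, the tail contribution is easily absorbed, and the rest of the argument is bookkeeping.
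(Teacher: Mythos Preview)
Your argument is correct and is the natural route to the result: bound $|X_{i,j}|$ via incoherence, observe that the entries of $\Delta_n$ are independent, centered, and have variance of order at most $\log^{O(1)}(n)/n^2$ (or $n^{-3}$ under the signal-model scaling of $A_n$), then apply a standard operator-norm bound for random matrices with independent entries and finish with Borel--Cantelli. The truncation step you describe for handling the bounded-moment (rather than sub-Gaussian) assumption on $A$ is also standard and goes through without difficulty, since the entrywise prefactor $|X_{i,j}|$ is already polynomially small.

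The paper itself does not give a self-contained proof of this lemma: it simply refers the reader to equations (35)--(38) of \cite{Nadakuditi2014}. That reference carries out essentially the same computation you outline --- use incoherence to show $\max_{i,j}|X_{i,j}|$ is $O(\mathrm{polylog}(n)/\sqrt{m_n n})$, and then invoke a spectral-norm bound for matrices with independent centered entries of vanishing variance. So your approach matches the one the paper delegates to; the only difference is that you have spelled out the argument, including the truncation needed for the bounded-moment assumption, whereas the paper outsources it.

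One small remark: the lemma as stated writes the variance of $A$ as $\sigma_A^2$, while you invoke the signal-model scaling $\sigma_A^2/n$. Either normalization leads to $\delta_{n,1}\to 0$ (you gain an extra factor of $n^{-1/2}$ with the latter), so this does not affect correctness, but it is worth being explicit about which convention you are using.
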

\begin{proof}
  See \cite{Nadakuditi2014} equations no. 35--38.
\end{proof}

The next lemma shows that adding a ``small perturbation'' to $Z_n$ does not
change the value of $H$ and $Q$ from Definition \ref{Hinf:def}.
\begin{lemma} \label{perturb:lem}
   Let $\Zc=\{Z_n\}_{n=0}^\infty$ a sequence of matrices s.t. $Z_n$ 
  is $m_n$-by-$n$ and  
  $\lim_{n\to \infty} m_n/n = \beta\leq 1$, and 
  let $\Uc$ and $\Vc$ be arbitrary sequences of orthonormal matrices 
 as in Definition \ref{Hinf:def}, which are either nonrandom or 
 independent of $\{Z_n\}$. Assume that
 for some $1\leq i \leq m_n$ and 
 $1\leq j \leq n$, 
 \begin{eqnarray*}
  H_{i,j} ( w \,|\, \Uc, \Zc) 
  &\aseq& f(w) \delta_{i,j} \qquad \text{and}\\
  Q_{i,j} ( w \,|\, \Vc, \Zc) 
  &\aseq& f(w)\delta_{i,j}\,.
 \end{eqnarray*}
 Let $\{\Delta_n\}_{n=0}^\infty$ be a sequece of matrices of the same sizes
 and assume that $\delta_{n,1}\aslim 0$ as $n\to\infty$, where 
 $\delta_{n,1}$ is the largest singular value of $\Delta_n$ ($n=1,2,\ldots$).
 Denote by $\bar{\Zc}$ the sequence of matrices $\{Z_n+\Delta_n\}$. 
 Then also
\begin{eqnarray*}
  H_{i,j} ( w \,|\, \Uc, \bar{\Zc}) 
  &\aseq& f(w) \delta_{i,j} \qquad \text{and}\\
  Q_{i,j} ( w \,|\, \Vc, \bar{\Zc}) 
  &\aseq& f(w)\delta_{i,j}\,.
 \end{eqnarray*}
\end{lemma}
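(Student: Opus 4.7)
The plan is to reduce the claim to two standard facts about Hermitian resolvents: the resolvent identity and continuity of the spectrum under small-norm perturbations. First I would write $M_n = Z_n Z_n'$ and $\bar M_n=(Z_n+\Delta_n)(Z_n+\Delta_n)'$, so that
\[
H(w|\Vu_{n,i},\Vu_{n,j},Z_n+\Delta_n) - H(w|\Vu_{n,i},\Vu_{n,j},Z_n) = \Vu_{n,i}'\bigl[(wI-\bar M_n)^{-1}-(wI-M_n)^{-1}\bigr]\Vu_{n,j},
\]
and invoke the resolvent identity
\[
(wI-\bar M_n)^{-1}-(wI-M_n)^{-1} = (wI-\bar M_n)^{-1}(\bar M_n-M_n)(wI-M_n)^{-1}.
\]
Taking operator-norm bounds reduces everything to showing that $\|\bar M_n-M_n\|\to 0$ almost surely while both outer resolvents stay uniformly bounded.

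For the middle factor, I would expand $\bar M_n - M_n = Z_n\Delta_n' + \Delta_n Z_n' + \Delta_n\Delta_n'$, giving $\|\bar M_n-M_n\|\le 2\|Z_n\|\,\delta_{n,1} + \delta_{n,1}^2$. Since $\|Z_n\|$ is almost surely bounded (Bai--Yin type control on the operator norm, consistent with the Mar\u{c}enko--Pastur setting already used in Lemma \ref{MP:lem}) and $\delta_{n,1}\to 0$ a.s.\ by hypothesis, this tends to $0$ almost surely.

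For the outer factors, existence of the limit $H_{i,j}(w|\Uc,\Zc)=f(w)\delta_{i,j}$ already forces $w$ to lie at positive distance from the almost sure limiting spectrum of $M_n$, so $\|(wI-M_n)^{-1}\|$ stays a.s.\ bounded for all large $n$. Weyl's inequality then gives $\max_k|\lambda_k(\bar M_n)-\lambda_k(M_n)|\le\|\bar M_n-M_n\|\to 0$, hence $w$ remains separated from the spectrum of $\bar M_n$ eventually, and $\|(wI-\bar M_n)^{-1}\|$ is a.s.\ bounded as well. Multiplying the three factors yields $|H_{i,j}(w|\Uc,\bar\Zc)-H_{i,j}(w|\Uc,\Zc)|\to 0$ a.s., which combined with the assumed limit for $\Zc$ proves $H_{i,j}(w|\Uc,\bar\Zc)\aseq f(w)\delta_{i,j}$. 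The argument for $Q$ is identical after replacing $Z_nZ_n'$ by $Z_n'Z_n$ and $\Vu_{n,i}$ by $\Vv_{n,i}$.

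The principal obstacle is verifying that $\|(wI-\bar M_n)^{-1}\|$ stays bounded uniformly in $n$; once that is in place, the resolvent identity collapses the problem to a product in which one factor vanishes and the other two are bounded. This control requires Weyl's inequality together with a uniform a.s.\ bound on $\|Z_n\|$, which I would either cite from a Bai--Yin-type result or extract directly from the Mar\u{c}enko--Pastur convergence already used in Lemma \ref{MP:lem}.
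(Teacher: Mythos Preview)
The paper does not give a self-contained proof of this lemma; it simply refers the reader to equations (33)--(34) of \cite{Nadakuditi2014}. Your resolvent-identity argument is the natural route and is almost certainly what underlies that citation.

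There is, however, a genuine gap in your argument as applied to the lemma \emph{as stated}. The hypotheses of Lemma~\ref{perturb:lem} are purely abstract: nothing is assumed about $Z_n$ beyond the existence of the limits $H_{i,j}$ and $Q_{i,j}$ for \emph{some} fixed pair of indices $i,j$. From that alone you cannot deduce that $\|Z_n\|$ is almost surely bounded, nor that $w$ sits at uniformly positive distance from the spectrum of $M_n=Z_nZ_n'$. A single bilinear form $\Vu_{n,i}'(wI-M_n)^{-1}\Vu_{n,j}$ can perfectly well converge while $\|(wI-M_n)^{-1}\|$ blows up---just take $\Vu_{n,i}$ orthogonal to the eigenspaces whose eigenvalues approach $w$. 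Your appeal to a Bai--Yin bound or to Lemma~\ref{MP:lem} does not rescue this, because those results assume i.i.d.\ entries and Lemma~\ref{perturb:lem} as written does not.

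The repair is straightforward: either (i) add ``$\limsup_n \|Z_n\|<\infty$ a.s.\ and $w$ lies outside the closure of the limiting spectrum'' as explicit hypotheses, or (ii) abandon the abstract formulation and fold the argument directly into the proof of Lemma~\ref{fundamental:lem}, where $Z_n=\mathbf{B}_n$ has i.i.d.\ entries of variance $\sigma_B^2/n$, so that Bai--Yin gives $\|Z_n\|\to\sigma_B(1+\sqrt{\beta})$ a.s.\ and the relevant $w$ is genuinely separated from the Mar\u{c}enko--Pastur bulk. With either fix in place, your resolvent-identity plus Weyl-inequality argument goes through verbatim.
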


\begin{proof}
  See \cite{Nadakuditi2014} equations no. 33 and 34.
\end{proof}

The next lemma in this chain of arguments is due to \cite{Benaych2012}:
\begin{lemma} \label{lemma2}
 Let $\Zc=\{Z_n\}_{n=0}^\infty$ a sequence of matrices s.t. $Z_n$ 
  is $m_n$-by-$n$ and  
  $\lim_{n\to \infty} m_n/n = \beta\leq 1$.
  Assume that $X_n$ is a sequence of matrices satistfying assumptions
  {\bf [A1]}--{\bf [A3]} from Section \ref{signal:def}.
  Define $Y_n=X_n+Z_n$ and 
  let $X_n=\sum_{i=1}^r x_i \Vut_i\Vvt_i'$
  and $Y_n = \sum_{i=1}^{m_n} y_{n,i} \Vu_{n,i}\Vv_{n,i}'$ denote their singular
  value decompositions, respectively.
  Assume that
 for any $\Uc$ and $\Vc$, which are arbitrary sequences of orthonormal matrices 
 as in Definition \ref{Hinf:def}, either nonrandom or 
 independent of $\{Z_n\}$, we have
for all $1\leq i \leq m_n$ and 
 $1\leq j \leq n$, 
\begin{eqnarray*}
  H_{i,j} ( w \,|\, \Uc, \Zc) 
  &\aseq& \int { \frac { d\mu_\Zc(t)}{w^2-t^2} \delta_{i,j} }  \\
  Q_{i,j} ( w \,|\, \Vc, \Zc) 
  &\aseq& \int { \frac { d\mu_\Zc(t)}{w^2-t^2} \delta_{i,j} } \,,
 \end{eqnarray*}
 where $ \mu_\Zc(t) $ is given by \eqref{MP:eq}.
 Then
  \begin{enumerate}
   \item For each $1\leq i\leq r$ we have 
     \begin{align} \label{y1:eq}
	\lim_{n \to \infty} y_{n_i} \aseq 
	\begin{dcases*}
	\sigma (\sqrt{(\frac{x_i}{\sigma}+\frac{\sigma}{x_i})(\frac{x_i}{\sigma}+\frac{\beta \sigma}{x_i}})) & for $x_i>\sigma  \beta ^\frac{1}{4}$ \\
	\sigma( 1+\sqrt{\beta})  & for $x_i\leq \sigma  \beta ^\frac{1}{4}$. \\
	\end{dcases*}
\end{align}
      \item  
 Let $1\leq i\leq r$ and $1\leq j \leq m_n$. 
    If $x_i>\sigma_B  \beta ^\frac{1}{4}$ and $j\leq r$, we have
    \begin{align} \label{c1:eq}
	d\cdot \lim_{n \to \infty} |\langle u_{n_i}^0,u_{n_j}\rangle|^2   \aseq 
	\begin{dcases*}
	\frac{(\frac{x_i}{\sigma})^4-\beta}{(\frac{x_i}{\sigma})^4 +\beta (\frac{x_i}{\sigma})^2}& $x_i=x_j$\\
	0 & $x_i\neq x_j$\\
	\end{dcases*}
\end{align}
and
\begin{align} \label{c2:eq}
	d\cdot \lim_{n \to \infty} |\langle v_{n_i}^0,v_{n_j}\rangle|^2   \aseq 
	\begin{dcases*}
	\frac{(\frac{x_i}{\sigma})^4-\beta}{(\frac{x_i}{\sigma})^4 +(\frac{x_i}{\sigma})^2}& $x_i=x_j$\\
	0 & $x_i\neq x_j$\\
	\end{dcases*}
\end{align}
Otherwise, both quantities converge a.s to $0$ as $n \to \infty$.
\end{enumerate}
\end{lemma}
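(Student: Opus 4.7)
The plan is to reduce the problem to a finite-dimensional secular equation in the spirit of Benaych-Georges and Nadakuditi, and then to read off both the displaced singular value and the overlap formulas from that equation via the hypothesized Stieltjes-transform limits.

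First I would symmetrize by passing to the Hermitian dilation
\[
\tilde Y_n = \begin{pmatrix} 0 & Y_n \\ Y_n' & 0 \end{pmatrix},
\]
whose non-zero eigenvalues are the signed singular values $\pm y_{n,i}$. In this picture $X_n$ becomes a symmetric perturbation of rank $2r$ supported on the columns of the $2r$ vectors $\{(\tilde{\V u}_i,0),(0,\tilde{\V v}_i)\}_{i=1}^r$ with ``coefficients'' $x_1,\dots,x_r$. For any $w$ outside the limit support of $Z_n$'s singular values, a standard Schur-complement argument shows that $w$ is an outlier singular value of $Y_n$ if and only if a $2r\times 2r$ determinantal equation $\det M_n(w)=0$ holds, where the entries of $M_n(w)$ are exactly the bilinear resolvent forms $H(w^2\mid \tilde{\V u}_i,\tilde{\V u}_j,Z_n)$ and $Q(w^2\mid \tilde{\V v}_i,\tilde{\V v}_j,Z_n)$ defined above, multiplied by $x_i$ and $w$ in the appropriate off-diagonal pattern.

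Second, I would apply the hypothesis of the lemma. Since $H_{i,j}$ and $Q_{i,j}$ converge a.s.\ to the Stieltjes transform $\varphi(w)=\int\frac{d\mu_{\Zc}(t)}{w^2-t^2}\delta_{i,j}$ of the Mar\u{c}enko--Pastur law, $M_n(w)$ converges a.s.\ to a deterministic matrix with block-diagonal structure indexed by $i$. The determinant factors into $r$ independent scalar secular equations of the form $1 = x_i^2\,\varphi(w)\,\tilde\varphi(w)$, where $\tilde\varphi$ is the companion transform that accounts for the aspect ratio $\beta$. Solving this scalar equation in closed form produces exactly the displacement formula \eqref{y1:eq}: when $x_i>\sigma\beta^{1/4}$ there is a genuine root outside the support, and when $x_i\le\sigma\beta^{1/4}$ the root collides with the bulk edge $\sigma(1+\sqrt{\beta})$, which is the BBP-type phase transition.

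Third, for the overlaps I would use the Cauchy integral representation of the spectral projector: for a small contour $\Gamma$ around the isolated outlier $y_{n,j}$,
\[
\V u_{n,j}\V u_{n,j}' = -\frac{1}{2\pi i}\oint_{\Gamma}(wI-Y_n Y_n')^{-1}\,dw,
\]
and similarly on the right. Expanding $(wI-Y_nY_n')^{-1}$ via Woodbury around $(wI-Z_nZ_n')^{-1}$ and sandwiching against $\tilde{\V u}_i$, $\tilde{\V v}_i$, one gets $|\langle\tilde{\V u}_i,\V u_{n,j}\rangle|^2$ and $|\langle\tilde{\V v}_i,\V v_{n,j}\rangle|^2$ as residues expressible through $\varphi$, $\tilde\varphi$ and their derivatives evaluated at the limit $y_{n,j}$ already computed in step two. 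The diagonal nature of the limiting $H_{i,j}$ and $Q_{i,j}$ forces the overlap to vanish whenever $x_i\ne x_j$, giving the ``off-diagonal is zero'' clause of \eqref{c1:eq}--\eqref{c2:eq}, while the asymmetric denominators $t^4+\beta t^2$ versus $t^4+t^2$ arise from the two different aspect-ratio factors on the left and right of the dilation.

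The main obstacle is the step-two bookkeeping: correctly identifying which entries of $M_n(w)$ carry $w$ versus $x_i$, and which carry $H$ versus $Q$, so that the determinant factors cleanly into the scalar secular equation. Once the block structure is pinned down, the rest is algebra plus the hypothesized a.s.\ convergence, and the threshold $\sigma\beta^{1/4}$ emerges automatically as the critical value of $x_i$ at which the scalar secular root crosses the bulk edge $\sigma(1+\sqrt\beta)$.
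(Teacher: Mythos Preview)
Your proposal is correct and follows exactly the Benaych-Georges--Nadakuditi strategy. Note, however, that the paper does not supply its own proof of this lemma at all: it simply states it as ``due to \cite{Benaych2012}'' and uses it as a black box in the proof of Lemma~\ref{fundamental:lem}. What you have written is therefore a faithful sketch of the argument in the cited reference rather than an alternative to anything the authors do; there is nothing to compare against in the paper itself.
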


We can finally connect the dots and prove Lemma \ref{fundamental:lem}.

\paragraph{Proof of Lemma \ref{fundamental:lem}.}  \label{fundamental:lem:proof}.
Per the lemma statement, let $Y_n=f_{A_n,B_n}(X_n)$ 
  be a matrix sequence following our signal model (Definition 
  \ref{signal:def}) with $A_n\iid(\mu_A,\sigma^2_A/n)$ and 
  $B_n\iid(0,\sigma^2_B/n)$.
  Let $\mathbf{A}_n$ be an $m_n$-by-$n$ matrix with entries 
  $(\mathbf{A}_n)_{i,j}\iid A_n$ and let
  $\mathbf{B}n$ be an $m_n$-by-$n$ matrix with entries 
  $(\mathbf{B}_n)_{i,j}\iid B_n$. We can write
  $(Y_n)_{i,j} = (\mathbf{A}_n)_{i,j} (X_n)_{i,j} + (\mathbf{B}_n)_{i,j}$.
  Letting $\Delta_n$ be the $m_n$-by-$n$ matrix with entries
  $(\Delta_n)_{i,j} = (\mathbf{A}_n)_{i,j} (X_n)_{i,j} - \mu_A (X_n)_{i,j}$
  we have 
  \[
    Y_n = \mu_A X_n + \mathbf{B}_n + \Delta_n\,.
  \]
  By Lemma \ref{approx:lem}, the top singular value $\delta_{n,1}$ of
  $\Delta_n$ satisfies 
  $\delta_{n,1}\aslim 0$ as $n\to \infty$.

  Now, choose   arbitrary sequences of orthonormal matrices 
$\Uc$ and $\Vc$
 as in Definition \ref{Hinf:def}, either nonrandom or 
 independent of $\{\mathbf{B}_n\}$.
 Let $\Bc$ denote the matrix sequence $\left\{ \mathbf{B}_n \right\}$ and 
 let $\bar{\Bc}$ denote the matrix sequence $\left\{ \mathbf{B}_n+\Delta_n
 \right\}$.
 Invoking Lemma \ref{perturb:lem} we obtain
\begin{eqnarray*}
  H_{i,j} ( w \,|\, \Uc, \Bc) 
  &\aseq& 
  H_{i,j} ( w \,|\, \Uc, \bar{\Bc}) \\
  Q_{i,j} ( w \,|\, \Vc, \Bc) 
  &\aseq& 
  Q_{i,j} ( w \,|\, \Vc, \bar{\Bc})\,. 
 \end{eqnarray*}
 for all $1\leq i \leq m_n$ and $1\leq j \leq n$.
 However,
 by Lemma \ref{MP:lem},  
for all $1\leq i \leq m_n$ and 
 $1\leq j \leq n$ we have
\begin{eqnarray*}
  H_{i,j} ( w \,|\, \Uc, \Bc) 
  &\aseq& \int { \frac { d\mu_\Zc(t)}{w^2-t^2} \delta_{i,j} }  \\
  Q_{i,j} ( w \,|\, \Vc, \Bc) 
  &\aseq& \int { \frac { d\mu_\Zc(t)}{w^2-t^2} \delta_{i,j} } \,,
 \end{eqnarray*}
 where $ \mu_\Zc(t) $ is given by \eqref{MP:eq} with 
 $\sigma\equiv \sigma_B$.
 It follows that the sequence 
 $Y_n = \mu_A X_n +(\mathbf{B}_n+\Delta_n) $
 satisfies all the assumptions of 
 Lemma  \ref{lemma2}, for
 arbitrary $\Uc$ and $\Vc$. We therefore conlude that, for $Y_n$, 
 equations \eqref{y1:eq}, \eqref{c1:eq} and \eqref{c2:eq} hold, replacing
 $\sigma$ with $\sigma_B$ and $x_i$ with $\bx_i$ ($1\leq i \leq r)$. The lemma follows.
\qed

\section{Contamination modes}
We describe here contamination modes that did not enter the main text. 

\paragraph{Basic contamination modes}
\begin{enumerate}
  \item {\bf Additive noise.} The simplest form of contamination is noise
    added to each entry. Assume $Y_{i,j} = X_{i,j} + Z_{i,j}$. The matching parameters for the model $A\bigodot X+B$ are $\mu_A = 1, \sigma_A = 0, \sigma_B = \sigma_Z$.
 
\item {\bf Missing-at-random.} Assume data with entries that are missing-at-random, where missing entries are replaced by zeros with probability (w.p.) $1-\kappa$.  Assume \[ Y_{i,j} = \begin{cases} X_{i,j} &
    \text{w.p. }\kappa\\ 0 & \text{w.p. }1-\kappa\ \end{cases}\,, \].
  \paragraph{} 
The matching parameters for the model $A\bigodot X+B$ are $\mu_A = \kappa, \sigma_A = 0, \sigma_B = 0$.   \item {\bf Outliers-at-random.}
    When some entries of $Y$ contain inordinate level of noise, the corresponding entries are said to be outliers:

    Assuming that this noise is additive, we can write 
    \[ Y_{i,j} =
      \begin{cases} X_{i,j} & \text{w.p. }\kappa\\ X_{i,j} + W_{i,j}  &  \text{w.p. } 1-\kappa
\end{cases}. \]
    With $W_{i,j}\iid (0,\tau^2)$. The matching parameters for the model $A\bigodot X+B$ are $\mu_A = 1, \sigma_A = 0, \sigma_B = \sqrt{(1-\kappa)\tau^2}$. 
 \item {\bf Multiplicative Noise.}
    Each signal entry is multiplied by a random noise distribution sample.
    $Y_{i,j} = Z_{i,j} \cdot X_{i,j} $. 
     The matching parameters for the model $A\bigodot X+B$ are $\mu_A =\mu_Z, \sigma_A = \sigma_Z, \sigma_B=0$. 
   \item {\bf Corrupt-at-random.} In some measurement processes some
entries are completely destroyed and {\em replaced} by randomly generated
noise. To distinguish this form of contamination from simple outliers, where
the original entry is not replaced, we refer to this as corruption.  
Assume \[
Y_{i,j} = \begin{cases} X_{i,j} & \text{w.p. }\kappa\\ W_{i,j}  &  \text{w.p. } 1-\kappa
\end{cases}\,.  \]  ,with $W_{i,j}\iid (0,\tau^2).$
The matching parameters for the model $A\bigodot X+B$ are $\mu_A =\kappa, \sigma_A = 0 , \sigma_B=\sqrt{(1-\kappa)\tau^2}$.   
\end{enumerate}
\paragraph{Composite contamination modes}
\begin{enumerate}
  \item{\bf Additive noise and missing-at-random.} Assume that
    additive noise has been added and then some entries were deleted and
    replaced with zeros.  Then 
    \[ 
      Y_{i,j} = \begin{cases} X_{i,j} + Z_{i,j}&
	\text{w.p. }\kappa\\  0 & \text{w.p. }1-\kappa\ \end{cases}\,.  
    \] 
   The equivalent $A$,$B$ selection is: $\mu_A = \kappa, \sigma_A = 0, \sigma_B = \sqrt{\kappa \sigma_Z^2}$.

  \item{\bf Additive noise and outliers-at-random.}         \[ 
      Y_{i,j} = \begin{cases} X_{i,j} + Z_{i,j}&
	\text{w.p. }\kappa\\ X_{i,j} + W_{i,j}  & \text{w.p. }1-\kappa\ \end{cases}\,.  
    \]  
    Where $W_{i,j}\iid (0,\tau^2)$, and $\tau$ is large compared to $\sigma_Z$.
    The equivalent $A$,$B$ selection is: $\mu_A = 1, \sigma_A = 0, \sigma_B = \sqrt{\kappa\sigma^2 +(1-\kappa)\tau^2}$.

     \item{\bf Multiplicative noise and corrupt-at-random.} \[ Y_{i,j} =
      \begin{cases} Z_{i,j} X_{i,j} & \text{w.p. }\kappa\\ W_{i,j}  & \text{w.p. }1-\kappa
    \end{cases}\,.  \]

   The equivalent $A$,$B$ selection is: $\mu_A = \kappa\cdot \mu_A, \sigma_A = 0, \sigma_B = \sqrt{(1-\kappa) \tau^2}$.

\end{enumerate}

\fi

	\section{Conclusions} \label{sec:summary} 

        Singular value shrinkage emerges as an effective method to reconstruct 
        low-rank matrices from contaminated data that is both practical and well
        understood. Through simple, carefully designed manipulation of the data
        singular values, we obtain an appealing improvement in the reconstruction
        mean square error. While beyond our present scope, following
        \cite{gavish2017optimal}, it is highly likely that the
        optimal shrinker we have developed offers the same mean square error,
        asymptotically, as the best rotation-invariant estimator based on the
        data, making it asymptotically
        the best SVD-based estimator for the target matrix.

        \section*{Acknowledgements}

        DB was supported by Israeli Science Foundation grant no. 1523/16 and 
        German-Israeli Foundation for scientific research and development 
        program no. I-1100-407.1-2015.

	\bibliography{mend_bib,hand_bib}
	\bibliographystyle{icml2017}

\end{document}